\numberwithin{equation}{section}
\newtheorem{thm}{Theorem}[section]
\newtheorem{definition}[thm]{Definition}
\newtheorem{lem}[thm]{Lemma}
\newtheorem{rmk}[thm]{Remark}
\newcommand{\toprule}{\hline}
\newcommand{\bottomrule}{\hline}
\newcommand{\midrule}{\hline}
\begin{document}
\author{Anindya Goswami}
\address{Department of Mathematics, IISER Pune, India}
\email{anindya@iiserpune.ac.in }
\author{Nimit Rana}
\address{Department of Mathematics, University of York, UK}
\email{nimit.rana@york.ac.uk}

\title{A market resilient data-driven approach to option pricing}\thanks{The authorship is distributed equally among all authors. The authors' names appear in the alphabetical order of their surnames. }

\begin{abstract}
In this paper, we present a data-driven ensemble approach for option price prediction whose derivation is based on the no-arbitrage theory of option pricing. Using the theoretical treatment, we derive a common representation space for achieving domain adaptation. The success of an implementation of this idea is shown using some real data. Then we report several experimental results for critically examining the performance of the derived pricing models.
\end{abstract}

\maketitle

\section{Introduction}
Some products or services are often bought to protect the buyer from uncertainties. Warranty is one of them. Figuring out a fair price for a warranty is not an easy task. Since the financial markets consist of risky assets, traders purchase some warranty-type contracts, called options. Several different types of options are traded by numerous traders daily in every modern stock exchange. Determination of an option's fair price is one of the central questions in Mathematical Finance. Following the seminal work \cite{BS} by Black and Scholes the mathematical treatment for addressing the option pricing problem quickly flourished. In the last fifty years, thousands of peer-reviewed articles came into existence to address various mathematical, computational, or statistical aspects of fair pricing of a large spectrum of options under diverse scenarios. Consideration of increasingly realistic stochastic models of asset returns and then showing that the model could preclude arbitrage, before deriving a stochastic or PDE representation of an option's price is the state of the art for option pricing research. These works have tremendously enriched our understanding of fair pricing of options, beyond any doubt. Besides, option pricing research also played a vital role in inspiring deeper and more diverse investigation of stochastic calculus, differential equations, numerical simulations etc. Given this, it is natural to have a judgment that ``pricing of option is not just about data science''. Without disagreeing with that judgment, we propose a novel direction of research that translates some existing data science ideas into the language of stochastic modeling to derive some theoretical algorithms of data-driven option pricing that leverage from both worlds. 

A solution to the option pricing problem is called data-driven if the approach assumes no theoretical model of asset dynamics and relies only on the observed data. However, it is possible to utilize the general theory of fair pricing of options on an abstract underlying asset while deriving a data-driven option pricing model. In other words, despite the pricing solution being data-driven the derivation of the solution need not always be data-driven. For example, in a recent work \cite{CLZ2021}, some theoretical constraints of fair price of options are utilized to come up with a data-driven model. The investigations on the prospect of various types of data-driven option pricing are not new in the literature. Nevertheless, the volume of work in this direction is much less than its counterpart in the theoretical option pricing. A comprehensive literature survey is beyond the scope of this paper.  In the initial works of \cite{Malliaris1993, Hutch94, Maddala96} the promise of data-driven option pricing is reported using the data of option contracts on the S\&P 500. Further investigation on feature selection appeared in \cite{Boek95, Malliaris1996}. Modularity is used in \cite{Gradojevic2009,Yao2000} for improving the pricing quality. In the context of option pricing, modularity entails dividing the data set into disjoint modules, according to the moneyness and time-to-maturity parameters of the contracts. Some more recent works \cite{GRT, SVS24, KTP23} build upon the above findings and a theoretical property of options contracts, called homogeneity hint \cite{GARCIA200093}. 

In this paper, we first revisit the homogeneity hint property by producing theoretical proof in a wide general setting. Then we explain the shortcomings of the option price prediction models based only on the homogeneity hint. Such models can learn from one asset class data and be applied to another, provided their log returns have similar distributions under their risk-neutral measures. However, requiring this identity in two different assets is quite restrictive. For this reason, we formulate, in a parametric setting, a common representation space that bridges two different risk-neutral return distributions. The common representation space is used for connecting option prices on assets having no similarities. This is a fresh idea for the machine learning application of option pricing, where domain adaptation is implemented. We show theoretically, as well as empirically that a data-driven approach based on a common representation space can gain accuracy on test data having significant domain shifts. For the empirical study, we consider the option price data of two indices in the National Stock Exchange (NSE) of India. It is worth noting that options on these indices (NIFTY 50, and NIFTY Bank) have very high trading volume in respect of the global market. 

We compare the performance of both approaches, the one that is based on homogeneity hint (HH) and the one that accounts for domain shift (DS) using the price data of options on NIFTY 50, and NIFTY Bank indices. We identify that option price data during the COVID lockdown period in India has a significant domain shift and is hence eligible to assess the performance of models for domain shift. The assessment shows a significant gain in accuracy for models that utilize a common representation space.
We also propose an ensemble model using the models based on HH and the one that is to address DS. We put forward a notion of domain shift quotient, that measures the degree of domain shift and is utilized to set weights on constituent models in the ensemble model. We have performed extensive experiments to assess the performance of the ensemble model using empirical as well as synthetic data. In this paper, we also assess the performance of each of these three approaches for multi-source training models, where the model is trained on data coming from symbols NIFTY 50 and NIFTY Bank. We report empirical evidence of the reliability of the multi-source ensemble model. Option price prediction model with multi-source training is capable of giving reliable answers even for an underlying asset, which has no or little historical data. This overcomes the common hurdle faced by any data-driven approach.

It is worth mentioning that we include no macroeconomic or fundamental financial variables in the feature sets, to retain the interpretability of the model. Thus the proposed approaches do not lack the interpretability of the model which is a primary criticism that a data science model commonly receives. We also clarify that the goal of this paper is not merely to showcase the usefulness of an existing or newly developed machine-learning method. This rather lays the foundation of the theoretical solution to domain adaptation, a data science idea, in the context of option pricing. 
The full potential of a non-asset-specific model with the above-mentioned domain adaptation may further be explored by extensive experimentation. In this connection, we recall that \cite{MR3991069} reports a similar extensive experiment to study some other universal non-asset-specific relations captured by a deep learning model. However, keeping in mind that this is the first research on domain adaptation in the context of option pricing, we focus more on theory building than on large-scale empirical verification across all markets. 

In this paper, we introduce a variable called \emph{volatility scalar} that plays a key role in the formulation of common representation space. Theoretically, it represents the root mean square of the volatilities of the underlying during the remaining life of the option, provided the underlying follows diffusion dynamics. This variable is used for some type of scaling at the stage of defining features in the common representation space. We derive its expressions under the constant volatility as well as stochastic volatility scenarios. These expressions allow one to estimate the value of the volatility scalar. However, we retain simplicity in the implementation by assuming constant volatility while estimating the variable for investigating whether the notion of domain adaptation works even with such a crude assumption. It is also worth noting that since we do not calibrate a theoretical model for the option price prediction, the discussion of no-arbitrage calibration like \cite{Fengler2009} is irrelevant here. We also do not address the question of hedging options. A comprehensive literature survey on the statistical hedging problem can be found in \cite{RW2020}.

The rest of the paper is organized as follows. In Section \ref{sec:TheoreticalFramework}, we develop the theoretical framework of approaches based on HH and one for addressing DS. For achieving the domain adaptation, a common representation space is formulated using an approximate identity in Section \ref{sec:Domain Adaptation}. In this section, we develop an ensemble approach too. The empirical data is described in Section \ref{sec:data}, using exploratory data analysis. Subsequently, we describe all proposed supervised learning approaches in Section \ref{sec:description-learning-approach}. In Section \ref{sec:model-performance}, we report the performance of every model based on proposed approaches. Some concluding remarks are added to Section \ref{sec:conclusion}.

\section{Theoretical Framework}\label{sec:TheoreticalFramework}
\subsection{Non-parametric Settings}
Let $(\Omega, \mathcal{F}, \mathbb{P}, \{\mathcal{F}_u\}_{u\ge 0})$ be a filtered probability space. In this section we assume that $ S^{(t,s)}:=\{ S^{(t,s)}(u), u \geq t\}$ is an adapted right continuous with left limit (rcll) process that models the price of a risky asset for all future time $u\ge t$ and at present time $t (\ge 0)$ it is equal to a positive value $s$. Evidently, $S=\{ S^{(t,s)} \mid t\ge 0, s> 0\}$ forms a family of adapted processes. With a minor but usual abuse of terminology, we often call $S$ itself as a process. Let $B:=\{B(u)\}_{u\ge 0}$ denote the price process of a risk-free asset. We further assume the following.
\noindent \textbf{Model Assumptions:-}
\begin{enumerate}
\item [M1.] The market model, consisting of risky and risk-free assets with price processes $ S^{(0,s)}$ and $B$ respectively is free of arbitrage under admissible strategies for any $s>0$. 
    \item [M2.] The family of processes $S$ satisfies
\begin{align}\label{SP-condn}
  S^{(t,s)}(u) = sS^{(t,1)}(u),
\end{align}
\noindent for all $u\ge t\ge 0$ and $s> 0$ almost surely.
\item [M3.] Either $S^{(0,s)}$ is Markov or there is an additional auxiliary process $V:=\{V(u)\}_{u\ge 0}$ such that $\{(S^{(0,s)}(u), V(u))\}_{u\ge 0}$ is Markov w.r.t. the filtration $\{\mathcal{F}_u\}_{u\ge 0}$ for all $s>0$.
\end{enumerate}
It is worth noting that M1 is a standard technical assumption. On the other hand, M2 asserts affine structure, i.e. $S$ is expressed as exponential. This is valid, for example, with Geometric Brownian motion (GBM), regime-switching GBM, Merton's Jump Diffusion model, etc. Additional $V$ in M3 is not needed for GBM, but that is essential for other multi-factor models. For example, $V$ is the Markov regime in a regime-switching model, instantaneous volatility in the Heston model, etc.

\begin{definition}[Moneyness] Let $K$ be the strike price of a call option. If the present asset price is $s$, $p=K/s$ is called the present moneyness of the said option. Note that $p  = 1$ represents at-the-money, $p  > 1$ represents out-of-the-money and $p  < 1$ represents in-the-money. 
\end{definition}

\begin{thm}\label{theo2.2}
\noindent Assume that the market contains two risky assets other than a risk-free asset, whose price processes are modeled by $S_1$ and $S_2$, satisfying (M1)-(M3) with a common auxiliary process $V$. 
Let $\varphi_i$ denote the European call option price function on the $i$th risky asset. That is,  $\varphi_i (t, s_i, v; K_i, T)$ gives the present (time $t$) price of the call option on $S_i$ having exercise price and maturity 
$K_i$ and $T$ respectively with $V(t) =v$ and $S_i(t)=s_i$. Also assume that the conditional laws of $S_1^{(t,1)}(T)$ and $S_2^{(t,1)}(T)$ given $V(t)=v$ are identical under their corresponding risk-neutral measures $\mathbb{P}_1$ and $\mathbb{P}_2$, respectively. If the contracts are chosen with an identical time to maturity and equal moneyness at present time $t$ i.e., $\frac{K_1}{s_1}= \frac{K_2}{s_2} = p$ (say), we have \begin{align}\label{HHeq}
{s_1}^{-1}\varphi_1 (t, s_1, v; ps_1, T)=  {s_2}^{-1}\varphi_2 (t, s_2, v; ps_2, T). 
\end{align}
\end{thm}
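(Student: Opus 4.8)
The plan is to reduce each side of \eqref{HHeq} to a single risk-neutral expectation that depends on the contract only through the common auxiliary state $v$, the shared moneyness $p$, and the shared time-to-maturity $T-t$, and then to invoke the assumed equality of conditional laws. First I would appeal to (M1): by absence of arbitrage under admissible strategies and the fundamental theorem of asset pricing, each single-asset market $(S_i, B)$ admits a risk-neutral measure $\mathbb{P}_i$ under which the time-$t$ call price is the discounted conditional expectation of its terminal payoff,
\begin{equation}\label{rnval}
\varphi_i(t, s_i, v; K_i, T) = \mathbb{E}^{\mathbb{P}_i}\!\left[ \frac{B(t)}{B(T)} \left(S_i^{(t,s_i)}(T) - K_i\right)^+ \,\Big|\, \mathcal{F}_t \right].
\end{equation}
The Markov assumption (M3) then guarantees that this conditional expectation is a measurable function of the current state $(S_i(t), V(t)) = (s_i, v)$ alone, which is what legitimizes writing the price as $\varphi_i(t, s_i, v; K_i, T)$.

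Next I would exploit the scaling structure (M2). Substituting $K_i = p s_i$ together with $S_i^{(t,s_i)}(T) = s_i\, S_i^{(t,1)}(T)$ into the payoff gives $\left(S_i^{(t,s_i)}(T) - p s_i\right)^+ = s_i \left(S_i^{(t,1)}(T) - p\right)^+$, so dividing \eqref{rnval} by $s_i$ yields
\begin{equation}\label{scaled}
s_i^{-1}\varphi_i(t, s_i, v; p s_i, T) = \mathbb{E}^{\mathbb{P}_i}\!\left[ \frac{B(t)}{B(T)} \left(S_i^{(t,1)}(T) - p\right)^+ \,\Big|\, V(t) = v \right].
\end{equation}
The conditioning on $S_i(t) = s_i$ here collapses to conditioning on $V(t)=v$ because, by (M2), the normalized process obeys $S_i^{(t,1)}(t) = 1$ deterministically, so conditioning on its current value carries no information.

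Finally I would compare the two instances of \eqref{scaled} for $i=1,2$. The integrand is the fixed map $x \mapsto (x-p)^+$ applied to $S_i^{(t,1)}(T)$, whose conditional law given $V(t)=v$ is assumed identical under $\mathbb{P}_1$ and $\mathbb{P}_2$; since the maturity and payoff are common, the two expectations coincide and \eqref{HHeq} follows. The step I expect to need the most care is the discount factor $B(t)/B(T)$: it factors cleanly out of \eqref{scaled}, leaving only the marginal conditional law of $S_i^{(t,1)}(T)$ that the hypothesis controls, precisely when the short rate is deterministic, whereas for a genuinely stochastic $B$ one would instead require the \emph{joint} conditional law of $\big(S_i^{(t,1)}(T), B(T)\big)$ given $V(t)=v$ to agree across the two assets. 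I would therefore make this assumption explicit, most naturally by taking a deterministic risk-free rate so that \eqref{scaled} reduces to a statement about the marginal law alone.
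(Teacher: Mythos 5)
Your proof is correct and follows essentially the same route as the paper's: (M1) yields the risk-neutral valuation under an EMM $\mathbb{P}_i$, (M3) collapses the conditioning to the state $(s_i,v)$, (M2) scales the payoff so that $s_i$ factors out, and the assumed equality of the conditional laws of $S_i^{(t,1)}(T)$ given $V(t)=v$ finishes the argument. Your closing caveat about the discount factor is well taken and is in fact a point the paper glosses over: its proof also equates the two expectations using only the marginal conditional law of $S_i^{(t,1)}(T)$, which tacitly requires $B(t)/B(T)$ to be deterministic (or determined by $V$), precisely the assumption you make explicit.
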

The above theorem is applicable for a wide range of models. For example, when two geometric Brownian motions (GBM) with different drifts but the same diffusion coefficients are taken, their dynamics under respective risk neutral measures are identical. The requirement of common $V$ is also not a real restriction. For example, consider two regime switching GBMs $S_1$ and $S_2$, with Markov regime processes $X_1$ and $X_2$ respectively. In other words, for each $i$, $X_i$ is a finite state continuous-time Markov chain that influences otherwise constant drift and volatility coefficients of a GBM $S_i$. Then the process $V:=(X_1, X_2)$,  is such that both $(S_1, V)$ and $(S_2, V)$ are Markov. It is also worth noting that due to the normalization condition in the equality of conditional law, the theorem remains applicable even if one asset's price is more than several orders of magnitude of another. The theorem is proved below.
\begin{proof}[Proof of Theorem \ref{theo2.2}]
\noindent \noindent It is given that the market contains two risky assets, whose price processes are modeled by $S_1$ and $S_2$, satisfying (M1)-(M3) with an auxiliary process $V$. For each $i=1,2$, let $\mathbb{P}_i$ be an equivalent martingale measure (EMM) corresponding to $S_i$, i.e., $S_i/B$ is martingale under $\mathbb{P}_i$, a probability measure equivalent to $\mathbb{P}$. Using $\mathbb{P}_i$, we express a present (time $t$) fair price of a European-style call option on the $i$th risky asset having the exercise price and maturity time as $K_i$ and $T$ respectively, as 
$$\mathbb{E}_i\left(\frac{B(t)}{B(T)}\left(S_i(T)-K_i\right)^+ \mid \mathcal{F}_t \right)$$ where $\mathbb{E}_i$ is the expectation w.r.t. $\mathbb{P}_i$. Using (M3), the above is 
$$ \mathbb{E}_i\Big(\frac{B(t)}{B(T)}(S_i(T)-K_i)^+ \mid S_i(t), V(t) \Big) = \varphi_i(t,S_i(t), V(t);K_i,T), 
$$
where $\varphi_i$ denotes the call option price function on the $i$th asset with maturity $T$ and exercise price $K_i$. For each $i=1,2$, using \eqref{SP-condn}, and the above equation, and by denoting $V(t) =v$, $S_i(t)=s_i$,  we get 
\begin{align}\label{OP-condn}
   \varphi_i(t,s_i, v;K_i,T) &= \mathbb{E}_i\Big(\frac{B(t)}{B(T)}(S^{(t,s_i)}_i(T)-K_i)^+ \mid S^{(t,s_i)}_i(t)=s_i , V(t)=v\Big) \nonumber\\
   & = \mathbb{E}_i\Big(\frac{B(t)}{B(T)}s_i(S^{(t,1)}_i(T)-\frac{K_i}{s_i})^+ \mid s_iS^{(t,1)}_i(t)=s_i, V(t)=v \Big) \nonumber\\
   &  = s_i \mathbb{E}_i\Big(\frac{B(t)}{B(T)}(S^{(t,1)}_i(T)-\frac{K_i}{s_i})^+ \mid S^{(t,1)}_i(t)=1, V(t)=v\Big) \nonumber\\
   & = s_i \varphi_i\bigg(t,1,v;\frac{K_i}{s_i},T\bigg).
\end{align}
Furthermore, it is given that the conditional laws of $S_1^{(t,1)}(T)$ and $S_2^{(t,1)}(T)$ given identical 
$V$ values at time $t$ under their corresponding risk-neutral measures $\mathbb{P}_1$, and $\mathbb{P}_2$ respectively are identical. 
Hence for the contracts with identical time to maturity and equal moneyness at time $t$, i.e., $\frac{K_1}{s_1}= p=\frac{K_2}{s_2}$, we have
\begin{align*}
& \mathbb{E}_1\Big(\frac{B(t)}{B(T)}(S_1^{(t,1)}(T)-p)^+ \mid S_1^{(t,1)}(t)=1, V(t)=v\Big) \\
& = \mathbb{E}_2\Big(\frac{B(t)}{B(T)}(S_2^{(t,1)}(T)-p)^+ \mid S_2^{(t,1)}(t)=1, V(t)=v\Big)\\
\text{or, }& \varphi_1(t,1, v; p,T)= \varphi_2 (t,1,v ;p,T).
\end{align*}
Using \eqref{OP-condn}, the above equality is rewritten as 
$${s_1}^{-1}\varphi_1 (t, s_1, v; ps_1, T)=  {s_2}^{-1}\varphi_2 (t, s_2, v; ps_2, T).$$
Hence, \eqref{HHeq} holds. 
\end{proof}

\begin{rmk}[Homogeneity Hint Approach or $\mathcal{A}_{HH}$] \label{rmkHH}
The above theorem is alluded to as \emph{homogeneity hint} (see \cite{GARCIA200093} and references therein) as the scale-free terms on both sides of \eqref{HHeq} are equal. This also reflects the homogeneity of order one of the option prices in stock and strike prices. For predicting option price, the scale-free ratio of option and stock price can be considered as a response/target variable in a learning approach whereas another observed scale-free quantity, the array of order statistics of mean-adjusted historical log returns, can be taken as a predictor/feature variable. By doing so, from the input asset price data, the magnitude, temporal, and drift information are removed and the spatial distribution of return is retained.
On the other hand from the predicted target value, which stands for the ratio of option and spot prices, the predicted option price is obtained by multiplying the target with the observed stock price. Such an approach has great flexibility. It can learn from one asset class data and be applied to another, provided their log returns have similar distributions. See \cite{GRT} for more details. We call such approaches as the Homogeneity Hint Approach or $\mathcal{A}_{HH}$.
\end{rmk}

However, requiring identical distributions of log returns in two different assets is quite restrictive, even under their risk-neutral measures. Unless the features and targets are further scaled, to factor out the large disparity of their distributions, the trained model fails to perform across assets. In other words, $\mathcal{A}_{HH}$ adapts poorly to domain shifts. Indeed such an approach may find feature values of the test data from another asset unfamiliar with reference to the training data and hence fail to perform for test dataset. In view of this, we wish to formulate a common representation space, i.e., a bridge between two different risk-neutral return distributions and utilize that for connecting their corresponding option prices. To this end, we adopt parametric models of asset price dynamics.

\subsection{Parametric Settings}
In this section and onwards we only consider the processes having continuous paths, unlike rcll process in the preceding section.

\noindent The simplest parametric model for the continuous-time dynamics of asset prices is geometric Brownian motion (GBM). The GBM which is also a Markov process satisfies (M1)-(M3), with no need of auxiliary process $V$ and hence in particular\eqref{SP-condn} holds. Therefore, under GBM the theoretical call option price function $C( t,s;K,T;r,\sigma)$ satisfies
$$C( t,s;K,T;r,\sigma)=sC\bigg( t,1;\frac{K}{s},T;r,\sigma\bigg)$$
as a consequence of \eqref{OP-condn}. The function $C$ is also known as the  Black-Scholes-Merton (BSM) formula for a European call option. Here we replace $(S(t), V(t))$ by only $S(t)$, since $S$ itself is Markov. As a result, the additional variable $V$ is dropped from the option price function. The rationale behind such discrepancies in notation in other places is also evident from the context, so the discrepancies cause no ambiguity.
Next, we consider an extension of BSM model for further discussion.
Let $(\Omega, \mathcal{F}, \mathbb{F}, \mathbb{P})$ be a filtered probability space, with a Brownian motion $W={W(t)}_{t \geq 0}$ adapted to $\mathbb{F}$. Let
$S=\{S(t)\}_{t\ge 0}$ denote the unique strong solution to the following SDE
\begin{equation}\label{eq-gbm}
    d S(t) = \mu(t)S(t) dt + \sigma(t) S(t) dW(t),
\end{equation}
where $\{\mu(t)\}_{t \ge 0}$ and $\{\sigma(t)\}_{t \ge 0}$ are adapted and bounded processes, such that (M1)-(M3) hold. The solution admits the explicit representation\begin{equation}\label{eq:explicit-soln}
	S(t) = S(0) e^{\int_0^t \left(\mu(u) - \frac{\sigma(u)^2}{2}\right) du+ \int_0^t \sigma(u) dW(u)}
\end{equation}
for an arbitrary initial value $S(0)>0$. 
\begin{definition}[$\rho$-scaling of a process]
Given a positive valued process $S= \{S(t)\}_{t\ge 0}$, define $A:= \{A(t)\}_{t \ge 0}$ such that ${A(t)}^{\rho}= S(t)$ for all $t\ge 0$ and for some $\rho > 0$. 
We call the process $A$ as $\rho$-scaling of process $S$. 
\end{definition}
The $\rho$-scaling is an easy recipe for obtaining a parametric family of processes, all having different risk-neutral return distributions. A more precise version of this statement is given in the following lemma. 
\begin{lem} \label{lem1}
Let $\rho>0$. Assume that $S$ satisfies \eqref{eq-gbm} such that (M3) holds with no additional auxiliary process $V$. Then \\
(i) The $\rho$-scaling satisfy (M1)-(M3). \\(ii) The instantaneous volatility of the  $\rho$-scaling of $S$ is identical to the instantaneous volatility of $S$ divided by $\rho$.
\end{lem}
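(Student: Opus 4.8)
The plan is to dispatch part (ii) first, since the It\^o computation that produces the volatility of the $\rho$-scaling simultaneously yields the stochastic differential of $A$, which is the engine for part (i). Writing $A(t) = S(t)^{1/\rho} = f(S(t))$ with $f(x) = x^{1/\rho}$, I would apply It\^o's formula to $f$; this is legitimate because $S$ is a positive continuous semimartingale and $f$ is smooth on $(0,\infty)$. Using $d\langle S\rangle_t = \sigma(t)^2 S(t)^2\,dt$ read off from \eqref{eq-gbm} and collecting terms, both the drift and the diffusion coefficient carry the common factor $A(t) = S(t)^{1/\rho}$, so that $A$ again solves a linear equation of the same shape as \eqref{eq-gbm}, namely $dA(t) = \tilde\mu(t) A(t)\,dt + \tilde\sigma(t) A(t)\,dW(t)$ with $\tilde\sigma(t) = \sigma(t)/\rho$ and $\tilde\mu(t) = \tfrac1\rho\mu(t) + \tfrac1{2\rho}(\tfrac1\rho - 1)\sigma(t)^2$. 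Reading off the coefficient of $dW$ gives part (ii) at once, and I would record that $\tilde\mu, \tilde\sigma$ remain adapted and bounded because $\mu, \sigma$ are.

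With the SDE for $A$ in hand, part (i) reduces to verifying (M1)--(M3) for any positive process solving an equation of the form \eqref{eq-gbm} with bounded adapted coefficients, and then applying this to both $S$ and $A$. For (M2) I would exhibit the explicit stochastic exponential $S^{(t,s)}(u) = s\exp\!\big(\int_t^u(\mu - \tfrac12\sigma^2)\,dr + \int_t^u \sigma\,dW\big)$: since the exponential factor is independent of the initial value $s$, the scaling identity \eqref{SP-condn} is immediate, and the identical computation applies to $A$ with $(\tilde\mu,\tilde\sigma)$. Alternatively, (M2) for $A$ falls straight out of (M2) for $S$ through the bijection, since $A^{(t,a)}(u) = \big(S^{(t,a^\rho)}(u)\big)^{1/\rho} = \big(a^\rho S^{(t,1)}(u)\big)^{1/\rho} = a\,A^{(t,1)}(u)$.

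For (M1) I would invoke Girsanov: the boundedness of the coefficients makes the market price of risk $\theta = (\mu - r)/\sigma$ (respectively $\tilde\theta = (\tilde\mu - r)/\tilde\sigma$, with $r$ the short rate of $B$) bounded, so Novikov's condition holds, the corresponding exponential density is a true martingale, and the induced measure is an equivalent martingale measure for the discounted price; the existence of such a measure precludes arbitrage under admissible strategies. For (M3) I would argue that $x \mapsto x^{1/\rho}$ is a smooth bijection of $(0,\infty)$, so the filtration generated by $A$ coincides with that generated by $S$, and whether it is $S$ itself or the pair $(S,V)$ that is Markov, the transformed object $A$ or $(A,V)$ inherits the Markov property. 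I expect the only real obstacle to lie in (M1): the market price of risk is well-defined only where $\sigma$ (hence $\tilde\sigma = \sigma/\rho$) is bounded away from zero, so I would either carry a standing non-degeneracy hypothesis on $\sigma$ or simply transport the equivalent martingale measure for $A$ from that of $S$, the two markets being driven by the same Brownian filtration.
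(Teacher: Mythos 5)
Your proposal matches the paper's proof essentially step for step: the same It\^{o} computation produces $dA(t) = \tilde\mu(t) A(t)\,dt + \frac{\sigma(t)}{\rho}A(t)\,dW(t)$ with $\tilde\mu = \frac{\mu}{\rho} - \frac{\rho-1}{2\rho^2}\sigma^2$ (identical to the paper's coefficient), the explicit exponential solution gives (M2) and (M3) for both $S$ and $A$, and (M1) is settled by Girsanov plus the fundamental theorem of fair pricing, exactly as in the paper (which likewise glosses over the non-degeneracy of $\sigma$ that you rightly flag); reversing the order of parts (i) and (ii) is immaterial. One caution on your final fallback remark only: the EMM for $S$ cannot simply be ``transported'' to $A$, since under that measure $A/B = S^{1/\rho}/B$ picks up a convexity drift term $-\frac{\rho-1}{2\rho^2}\sigma^2 + r\left(\frac{1}{\rho}-1\right)$ and fails to be a martingale, so a separate Girsanov shift for $A$ is genuinely needed --- which is what both your primary argument and the paper actually do.
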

\begin{proof}
\noindent (i) 
Using the explicit solution \eqref{eq:explicit-soln}, the $\rho$-scaling satisfies
\begin{equation}\label{2.6}
    A(t) = S(t)^{\frac{1}{\rho}}= A(0) e^{
\left[\rho^{-1} \int_0^t \left(\mu(u) - \frac{\sigma(u)^2}{2}\right) du +\int_0^t \frac{\sigma(u)}{\rho}dW(u) \right]} 
\end{equation}
as $S(0)^{\frac{1}{\rho}} = A(0)$. Therefore, $A$ also satisfies \eqref{SP-condn} and (M3). Similarly, it is not hard to see that 
$A$ fulfill (M1), as an application of Girsanov's transformation and the fundamental theorem of fair pricing (See  \cite[Example 2 in Subsection 7.4]{KK00}. 

\noindent (ii) Using It\^{o}'s lemma, from \eqref{2.6}, we get 

\begin{align}\label{eq-SimpleReturn-A}
\nonumber
d A(t) =& \left( \frac{\mu(t)}{\rho} - \frac{\rho-1}{2 \rho^2}\sigma(t)^2  \right) A(t) dt + \frac{\sigma(t)}{\rho} A(t) dW(t)\\
    \textrm{i.e., } \frac{1}{A(t)} d A(t) =& \left( \frac{\mu(t)}{\rho} - \frac{\rho-1}{2 \rho^2}\sigma(t)^2  \right)  dt + \frac{\sigma(t)}{\rho} dW(t).
\end{align}
The left side is the instantaneous simple return of $A$. Thus the instantaneous volatility of $A$, i.e., the coefficient in $dW(t)$ term, is $\rho^{-1}$ times the volatility of $S$.
\end{proof}
Under the assumption on $S$, although Lemma \ref{lem1} implies that $A$ satisfy (M1)-(M3), Theorem \ref{theo2.2} is not applicable between them. In other words, an identity like \eqref{HHeq} between the option prices on $S$ and $A$ does not hold, as their returns' distributions under the respective risk-neutral measures are not identical.
However, given a pair of assets whose prices are modeled by \eqref{eq-gbm} with different deterministic parameter values, one can choose a pair of $\rho$ values so that the assets' $\rho$-scalings satisfy an identity like \eqref{HHeq}. This fact is stated in the following theorem.

\begin{thm}\label{theo2.5}
Assume that for each $i=1,2$, $S_i$ solves \eqref{eq-gbm} with $\mu =\mu_i$, $\sigma =\sigma_i$ and Brownian motion $W=W_i$. For a fixed $t\in [0,T)$, denote
\begin{align}\label{rho_cond}
\rho_i:= \left(\frac{1}{T-t}\mathbb{E}\int_t^T \sigma_i^2(u) du\right)^{\frac{1}{2}}, \quad \forall i=1,2,
\end{align} 
where $\mathbb{E}$ is the expectation with respect to $\mathbb{P}$, and denote the price function of a European call option on the $\rho_i$-scaling of $i$th asset, by  $\psi_i$. Then, if $\mu_i$ and $\sigma_i$ are deterministic processes for each $i=1,2$, \\
(i) the risk-neutral distribution of log return of the $\rho_i$-scaling of $i$th assets are identical,\\
(ii) for all $a_1>0$, $a_2>0$, $p>0$
\begin{align}\label{eqphi}
\frac{1}{a_1} \psi_1(t,a_1, pa_1, T)= \frac{1}{a_2} \psi_2(t,a_2, pa_2, T).
\end{align}
\end{thm}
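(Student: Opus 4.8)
The plan is to reduce part (ii) to Theorem \ref{theo2.2} once part (i) has been established, since the $\rho_i$-scalings $A_i$ of the two assets already satisfy (M1)--(M3) by Lemma \ref{lem1}(i). The whole argument therefore hinges on part (i): showing that the special choice of $\rho_i$ in \eqref{rho_cond} as the root-mean-square volatility over $[t,T]$ is exactly what forces the two risk-neutral log-return laws to agree.

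For part (i), I would first write the return of the $\rho_i$-scaling over $[t,T]$. Starting from \eqref{2.6} and using the Markov/independent-increment structure of the solution of \eqref{eq-gbm}, one has
$$\log\frac{A_i(T)}{A_i(t)} = \rho_i^{-1}\int_t^T\Big(\mu_i(u)-\tfrac12\sigma_i^2(u)\Big)\,du + \rho_i^{-1}\int_t^T \sigma_i(u)\,dW_i(u).$$
Next I would pass to the equivalent martingale measure $\mathbb{P}_i$ of $A_i$. Since by \eqref{eq-SimpleReturn-A} the process $A_i$ is again of the form \eqref{eq-gbm} with volatility $\sigma_i/\rho_i$, Girsanov's theorem (as invoked in Lemma \ref{lem1}(i)) produces a $\mathbb{P}_i$-Brownian motion $\widetilde W_i$ under which the drift of $A_i$ is the risk-free rate, so that
$$\log\frac{A_i(T)}{A_i(t)} = \int_t^T\Big(r-\tfrac12\,\rho_i^{-2}\sigma_i^2(u)\Big)\,du + \rho_i^{-1}\int_t^T \sigma_i(u)\,d\widetilde W_i(u).$$

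The crux is then the elementary but decisive computation that the definition \eqref{rho_cond} of $\rho_i$ makes the integrated squared volatility collapse to a constant, namely $\rho_i^{-2}\int_t^T\sigma_i^2(u)\,du = T-t$. Consequently the $\mathbb{P}_i$-law of $\log(A_i(T)/A_i(t))$ is Gaussian with mean $(r-\tfrac12)(T-t)$ and variance $T-t$, both free of $i$; hence the two risk-neutral log-return laws coincide, which is (i). I expect this collapse to be the main obstacle, because it is clean only when the volatility is deterministic (the constant-volatility regime the paper later adopts): for genuinely stochastic $\sigma_i$ the quantity $\rho_i^{-2}\int_t^T\sigma_i^2\,du$ is itself random with mean $T-t$, so the return law becomes a normal variance mixture and matching the two laws would additionally require the integrated-variance distributions of the two assets to agree. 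I would therefore establish (i) under deterministic $\sigma_i$, where the expectation in \eqref{rho_cond} acts trivially, and flag the stochastic case as needing this extra hypothesis.

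Finally, for part (ii) I would simply invoke Theorem \ref{theo2.2} with the two assets replaced by $A_1$ and $A_2$. These satisfy (M1)--(M3) by Lemma \ref{lem1}(i) (with no auxiliary process needed, as each is Markov), and by part (i) the conditional laws of $A_1^{(t,1)}(T)$ and $A_2^{(t,1)}(T)$ under $\mathbb{P}_1$ and $\mathbb{P}_2$ are identical. Applying \eqref{HHeq} with $s_i=a_i$ and $K_i=pa_i$, and dropping the now-trivial $V$-argument, yields exactly \eqref{eqphi}.
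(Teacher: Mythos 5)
Your proposal follows essentially the same route as the paper's proof: form the $\rho_i$-scalings $A_i$, pass to their equivalent martingale measures via Girsanov so that $A_i/B$ becomes a stochastic exponential of $\rho_i^{-1}\int \sigma_i\,dW_i'$, show the two risk-neutral return laws coincide, and then obtain (ii) from Lemma \ref{lem1} and Theorem \ref{theo2.2} with $s_i=a_i$, $K_i=pa_i$ --- your part (ii) is exactly the reduction the paper performs. The one point where you diverge is the step you flag as the ``main obstacle,'' and your caution there is warranted: it is in fact the weak point of the paper's own argument. The paper disposes of this step by asserting that, since $\sigma_i$ is bounded, the integral $\rho_i^{-1}\int_t^T \sigma_i(u)\,dW_i'(u)$ is a Gaussian random variable with variance $\rho_i^{-2}\mathbb{E}\int_t^T \sigma_i^2(u)\,du$, so that the choice \eqref{rho_cond} ensures \eqref{sigrho}. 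Boundedness alone does not give this: a stochastic integral is Gaussian when its quadratic variation $\rho_i^{-2}\int_t^T \sigma_i^2(u)\,du$ is deterministic (e.g.\ deterministic $\sigma_i$), whereas for genuinely stochastic $\sigma_i$ the law is, as you say, a variance mixture of normals (in the independent-volatility case), and matching the two laws then requires matching the laws of the normalized integrated variances, not merely their means. There is also a second, related slip that your framing makes visible: even granting Gaussianity, the relevant variance is the expectation of $\int_t^T \sigma_i^2(u)\,du$ under the risk-neutral measure $\mathbb{P}_i'$, while \eqref{rho_cond} takes $\mathbb{E}$ under $\mathbb{P}$; these agree for deterministic $\sigma_i$ but not in general, since the Girsanov change can alter the law of $\sigma_i$. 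So your deterministic-volatility proof of (i) --- the collapse $\rho_i^{-2}\int_t^T\sigma_i^2(u)\,du = T-t$ giving the common $N\bigl((r-\tfrac12)(T-t),\,T-t\bigr)$ law --- is a correct, self-contained argument covering the regime in which the paper's proof is actually valid, and the extra hypothesis you propose (equality in law of the normalized integrated variances under the respective EMMs) is precisely what would be needed to make the theorem hold as stated in the stochastic-volatility case.
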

\begin{proof}[Proof of Theorem \ref{theo2.5}] 
Note that $S_i$ solves for all $t> 0$
\begin{equation}
\nonumber d S_i(t) = \mu_i(t)S_i(t) dt + \sigma_i(t) S_i(t) dW_i(t),
\end{equation}
where the Brownian motions $W_1$ and $W_2$ may or may not be independent. Since $\mu_i$ and $\sigma_i$ are deterministic processes, corresponding $S_i$'s satisfy Assumption (M3) with no additional auxiliary process $V$.

Now we fix a $t>0$ and for each $i=1,2$, set $\rho_i$ arbitrarily and $ A_i(u) = S_i(u)^{\frac{1}{\rho_i}}$ for all $u\ge t$ and also denote the European call option price function on the asset having price process $A_i$ as $\psi_i$.
In view of \eqref{eq-SimpleReturn-A}, $A_i$ solves the following SDE
$$ dA_i(u) = A_i(u)\left(\frac{dB(u)}{B(u)} + \frac{\sigma_i(u)}{\rho_i} dW_i'(u)\right)$$
where, $W_i'(\cdot) := W_i(\cdot) + \int_0^\cdot \left( \frac{\mu_i(u)}{\sigma_i(u)} - \frac{\rho_i-1}{2 \rho_i} \sigma_i(u)\right) du - \int_0^\cdot\frac{\rho_i}{\sigma_i(u)} \frac{dB(u)}{B(u)}$. 
Hence 
\begin{align}\label{SDE-A}
\frac{A_i}{B}(\cdot)= \frac{A_i}{B}(t)\mathcal{E}\left(\int_t^\cdot \frac{\sigma_i(u)}{\rho_i} dW_i'(u)\right)
\end{align}
where $\mathcal{E}(X)$ denotes the Dol\'{e}ans-Dade exponential of the semi-martingale $X$.
For each $i=1,2$, let $\mathbb{P}_i'$ denote a probability measure equivalent to $\mathbb{P}$ such that $W_i'$ is a Brownian motion under $\mathbb{P}_i'$. 
This exists as a consequence of Girsanov's Theorem. Then $A_i/B$ is a martingale under $\mathbb{P}_i'$, and hence $\mathbb{P}_i'$ is an equivalent martingale measure (EMM) corresponding to $A_i$.

Next, we choose the constants $\rho_1$, and $ \rho_2$ so that under the respective risk-neutral measures $\mathbb{P}_1'$, and  $\mathbb{P}_2'$, the conditional laws of $A_1^{(t,1)}(T)$ and $A_2^{(t,1)}(T)$ are identical. That is, using \eqref{SDE-A}
\begin{align} \label{sigrho}
\int_t^T \frac{\sigma_1(u)}{\rho_1}dW_1'(u) \overset{d}{=} \int_t^T \frac{\sigma_2(u)}{\rho_2}dW_2'(u)
\end{align}
where the equality ``$\overset{d}{=}$'' is in the sense of distribution. Since the conditional distributions of $A_i(T)/A_i(t)$ under the corresponding risk neutral measures are identical, from Lemma \ref{lem1} and Theorem \ref{theo2.2} we get \eqref{eqphi}. 
Moreover, as $\sigma_i$ is bounded and deterministic, $\int_t^T \frac{\sigma_i(u)}{\rho_i} dW_i'(u)$ becomes Gaussian random variable with mean zero and variance $\rho_i^{-2}\mathbb{E}\int_t^T \sigma_i^2(u) du$ for each $i$. Hence the values of the parameters $\rho_1$ and $\rho_2$ specified by \eqref{rho_cond} are sufficient to ensure  \eqref{sigrho}. Hence the proofs of (i) and (ii) are complete.
\end{proof}

Although $\rho$ in \eqref{rho_cond} depends on the law of $\sigma$, and values of $t$, and $T$, we don't write it here to avoid clumsy notation. However, we refer to the value in \eqref{rho_cond} as \emph{volatility scalar} to avoid ambiguity. For the rest of the paper, $\rho$ and $\rho_i$ denote the volatility scalars of $S$ and $S_i$ respectively, unless otherwise mentioned.
\begin{rmk}[Estimation of the volatility scalar $\rho$]\label{Rem:rho}
From \eqref{rho_cond}, it is also evident that, if $\sigma_1$ and $\sigma_2$ are stationary, $\rho_1$, $\rho_2$ can be estimated from the past data. More precisely, $\rho$, the average of forward variance, see \eqref{rho_cond}, is typically estimated by taking the past 20 days' data and we assume that the past 20 days' historical volatility is a good estimator of future volatility, at least in average. It is worth mentioning that $\rho$ must be re-estimated as $t$ and $T$ change which is a sensible thing to do as in practice one needs to calibrate the model at every epoch (i.e., as $t$ changes and then time to maturity $T-t$ also changes). 

The selection of $\rho$, so that \eqref{rho_cond} holds becomes particularly simple if the volatility factor $\sigma$ is constant in time. Indeed, the volatility scalar $\rho$ becomes identical to $\sigma$. Even if the volatility process is neither constant nor stationary but follows a parametric diffusion equation, the volatility scalar may be estimated by estimating the parameters. We illustrate this by considering the Heston model. In that, the square of volatility follows a Cox–Ingersoll–Ross (CIR) process, i.e., for all $t\ge 0$
$$ d\sigma^2(t) = \kappa (\theta - \sigma^2(t)) dt + \xi \sigma(t) d\tilde{W}(t), 
$$
with $\sigma^2(0) >0$ where $\kappa, \theta, \xi, \sigma(0)$ are positive parameters and $\tilde{W}$ is a standard Brownian motion. A direct calculation shows that 
$$
\frac{1}{T-t}\mathbb{E}\int_t^T \sigma^2(u) du = \theta - (\theta-\sigma^2(t))\frac{1-e^{-\kappa (T-t)}}{\kappa (T-t)}.
$$
Therefore, $\rho = \sqrt{\left( \theta  - (\theta-\sigma^2(t))\frac{1-e^{-\kappa (T-t)}}{\kappa (T-t)} \right)} $, a number between the long-run mean of volatility $\sqrt{\theta}$ and the current volatility $\sigma(t)$, can be estimated using the estimator of $\theta, \sigma(t)$ and the speed parameter $\kappa$ of mean reversion.

On the other hand, for a special case of non-parametric settings, consider two stocks whose prices satisfy \eqref{eq-gbm} with identical volatility processes, i.e., $\sigma_1 \overset{d}{=} \sigma_2$. 
Then \eqref{sigrho} holds with any equal values of $\rho_1$ and $\rho_2$. 
\end{rmk}
In the aim of building a market-resilient learning approach for option price prediction, we follow the approach of common representation. First of all, we are required to select features that can help to extrapolate for successfully predicting option prices corresponding to atypical out-sample data. In this connection, Theorem \ref{theo2.5} (i) is useful. In particular, it implies that the risk-neutral distribution of log return of the $\rho$-scaling of the asset is a market-resilient feature influencing the option pricing mechanism. 
Needless to say, the learning approach should also have features coming from contract parameters and risk-free assets. This feature representation from source and target domains is indistinguishable. Next, we are required to decide on a suitable target that depends solely on features and not on factors, not included in the feature for formulating the common representation space. There is no unique or obvious way to do that. For example, a relation between $\varphi_i$ and $\psi_i$, the observable and unobserved option prices of the assets with prices $S_i$ and $A_i$ for each $i$, may be derived. Then a relation between $\varphi_1$ and $\varphi_2$, the observable option prices of the assets with prices $S_1$ and $S_2$ can be derived, as Theorem \ref{theo2.5} relates $\psi_1$ and $\psi_2$. It seems that this is hard in general. If for each $i=1,2$,
\begin{align}\label{eq:F}
\psi_i(t,a_i, pa_i, T) = F(\varphi_i(t, a_i^{\rho_i}, p a_i^{\rho_i}, T), a_i,\rho_i, p,T)
\end{align}
for some known $F$,  then from Theorem \ref{theo2.5} we have 
\begin{align}\label{eq11}
\frac{1}{s_1^{1/\rho_1}} F(\varphi_1(t, s_1, p s_1, T),  s_1^{1/\rho_1},\rho_1, p,T) = \frac{1}{s_2^{1/\rho_2}} F(\varphi_2(t, s_2, p s_2, T),  s_2^{1/\rho_2},\rho_2, p,T) =\mathcal{U} \text{ (say)}
\end{align}
provided $\rho_1$ and $\rho_2$ are chosen as in \eqref{rho_cond}. The target $\mathcal{U}$ is invariant for two different contracts with identical moneyness and time to maturity on two different assets, possibly with domain shifts. Also, $\mathcal{U}= \psi_i(t,a_i, pa_i, T)/a_i$, which depends solely on the $A_i$ features and the features related to the contract and risk-free asset. One may perhaps obtain some other targets based on a transformation involving implied volatility etc so that similar common representation is created. Assume that a learning approach $\mathcal{A}$ is trained to take centered return distribution of $S_i^{1/\rho_i}$ (i.e., the returns of $A_i$ after being subtracted by its mean) and variables and parameters $p$, $T$, $r$ as inputs and to give $\mathcal{U}$ as target. Then $\mathcal{A}$ learns a fundamental property of option pricing from the common representation space, that holds across assets having a wide range of return distributions. Further, if the map $\varphi \mapsto \mathcal{U}$ is invertible, then for test data, the predicted target may be used for computing the option price. We explain this idea by making use of an approximation formula \cite{Bharadia96} in the next section.

\section{Application of an Implied Volatility Estimate in Domain Adaptation}\label{sec:Domain Adaptation}
\subsection{A Common Representation Space} In this section, instead of an equality \eqref{eq:F}, which is hard to find, we use an approximate equality relation to derive an approximation of the equality \eqref{eq11}. To this end, we refer to \cite{Bharadia96}, which gives a simple approximation of implied volatility, the inversion of the BSM formula for near ATM options. The literature on the inversion of the BSM formula is reasonably rich. We cite \cite{Li05} for various other improvements over the simple formula in \cite{Bharadia96} which we use in this section. 
For each asset $i=1,2$, the approximation formula for $IV^{S_i, p_i, T_i}$, the implied volatility of asset $S_i$ corresponding to a call contract with moneyness and time to maturity $ p_i$, and $ T_i$ respectively, is given by 
\begin{align*}
IV^{S_i, p_i, T_i} \approx & \sqrt{\frac{2 \pi}{T_i}} \left(\frac{\varphi_i(0, s_i, p_i s_i, T_i)}{s_i(1+p_i^*)/2} - \frac{(1-p_i^*)}{(1+p_i^*)}\right), 
\end{align*}
where $p_i\approx 1$, $p_i^*= p_ie^{-rT_i}$, $s_i$ is the present price of $i$th stock. We recall that $\varphi_1$ and $\varphi_2$ are price functions of European call contracts on assets having prices $S_1$ and $S_2$ respectively which follow diffusion equations as in \eqref{eq-gbm}. Since, $\rho_i$ is defined (see Definition \ref{rho_cond}) using the instantaneous volatility values on the time interval between present and maturity time, the implied volatilities should nearly be proportional to $\rho_i$, provided all other parameters are kept constants. That is if $p_1\approx p_2$, and $T_1 \approx T_2$, we have
$$\frac {IV^{S_1, p_1, T_1}}{ \rho_1} \approx  \frac{ IV^{S_2, p_2, T_2}}{\rho_2}.
$$
However, the stylized facts and other theoretical validations of volatility smile indicate that the said approximation should worsen as $T_1$, and $p_1$ depart from $T_2$, and $p_2$. Finally, by plugging in the approximate expressions in the above approximate equality we get
\begin{align}\label{eqpsi}
\sqrt{\frac{2 \pi}{T_1}} 
 \left(\frac{\varphi_1(0, s_1, p_1 s_1, T_1)}{s_1(1+p_1^*)/2} - \frac{(1-p_1^*)}{(1+p_1^*)}\right)/\rho_1 
 \approx & \sqrt{\frac{2 \pi}{T_2}} 
 \left(\frac{\varphi_2(0, s_2, p_2 s_2, T_2)}{s_2(1+p_2^*)/2} - \frac{(1-p_2^*)}{(1+p_2^*)}\right)/\rho_2.
\end{align}
We denote the left and right sides by $\mathcal{U}(s_i,r, \sigma_i, p_i,T_i,\rho_i)$ with $i=1,2$ respectively. Also, we can approximate $\mathcal{U}(s_i,r, \sigma_i, p_i,T_i,\rho_i)= \frac {IV^{S_i, p_i, T_i}}{ \rho_i} \approx IV^{A_i, p_i, T_i} $, the implied volatility corresponding to the asset $A_i$. The last approximate equality is due to the identical values of moneyness and time to maturity and the fact that the volatility scalar value corresponding to $A$ is one.
Hence, $\mathcal{U}(s_i,r, \sigma_i, p_i,T_i,\rho_i)$ may be regarded as primarily dependent on asset $A_i$ features and the features related to contract and risk-free asset.

\subsection{Discussion on the approximation error}
Equation \eqref{eqpsi} expresses an approximate relation between these two option prices and conveys that $\mathcal{U}$ is nearly constant wrt $s, \sigma$, and the volatility scalar $\rho$. 
In particular, if prices of a pair of stocks satisfy \eqref{eq-gbm} with $\sigma_1 \overset{d}{=} \sigma_2$, then \eqref{sigrho} holds with $\rho_1=\rho_2 =\rho$, for any arbitrary value. With $\rho_1=\rho_2=1$, clearly $S_i=A_i$ and $\psi_i=\varphi_i$ for each $i=1,2$. 
Consequently \eqref{eqphi} gives $\frac{\varphi_1(0, s_1, p_1 s_1, T)}{s_1} = \frac{\varphi_2(0, s_2, p_2 s_2, T)}{s_2}$. On the other hand if $T_1=T_2=T$, $p_1=p_2=p$, Equation \eqref{eqpsi} gives
\begin{align*}
 \left(\frac{\varphi_1(0, s_1, p s_1, T)}{s_1} - \frac{(1-p^*)}{2}\right)/\rho  
\approx & \left(\frac{\varphi_2(0, s_2, p s_2, T)}{s_2} - \frac{(1-p^*)}{2}\right)/\rho\\
\textrm{i.e., } \frac{\varphi_1(0, s_1, p s_1, T)}{s_1} \approx &
\frac{\varphi_1(0, s_2, p s_2, T)}{s_2} .
\end{align*}
Thus the relation in \eqref{eqpsi} is exact, i.e., `$\approx$' can be replaced by `$=$', when the volatility scalars are identical, i.e., $\rho_1=\rho_2$ and the contract moneyness and time to maturities match. 

\noindent The approximation error in \eqref{eqpsi} associated to a pair similar contracts (i.e.,  $T_1=T_2=T$, $p_1=p_2=p$) on assets having volatility processes $\sigma_1$ and $\sigma_2$ is measured by the relative error $\sup\{\frac{|\mathcal{U}_1 - \mathcal{U}_2|}{ \max(\mathcal{U}_1, \mathcal{U}_2)} \mid s_1>0, s_2>0\}$   
where, $\mathcal{U}_i= \mathcal{U}(s_i,r, \sigma_i, p, T,\rho_i)$. We compute this error for the ATM options, i.e., moneyness parameter $p=1$ under the special case of BSM models with constant volatility coefficients. The time to maturity $T$ is taken as $0.2$ year. In Figure \ref{fig:approx}, the horizontal and vertical axes represent the volatility coefficients values of two different assets. Each volatility value ranges on $[5\%, 100\%]$. The darkness of shades show the relative error values for each ordered pair $(\sigma_1, \sigma_2)$. The maximum error is less than $4.5\%$. If none of $\sigma_1$ and  $\sigma_2$ is smaller than $9\%$, the error is less than $2\%$. This numerically validates that \eqref{eqpsi} is a reasonable approximation for practical purposes where the volatilities are not too small.
\begin{figure}[h]
\centering
\begin{subfigure}{.5\textwidth}
\centering
     \includegraphics[width=\linewidth]{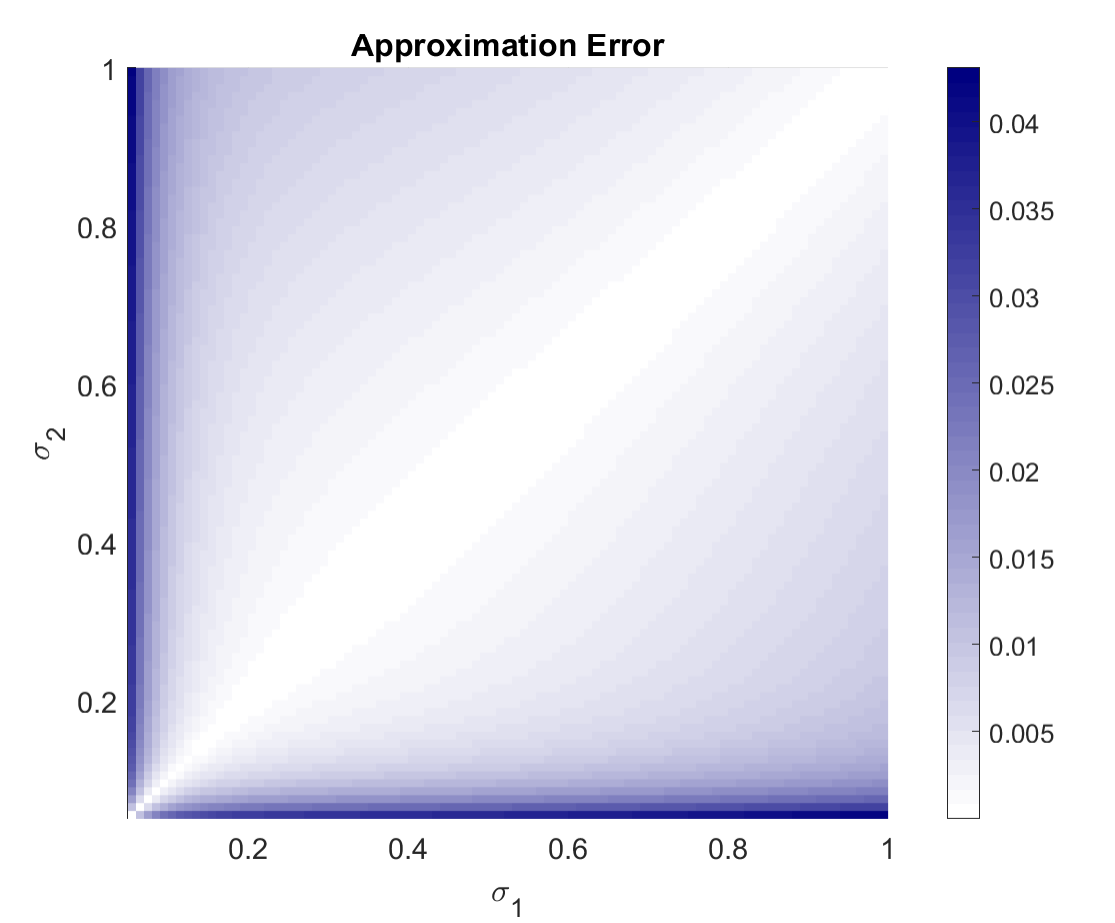}
    \caption{Grid heatmap of the approximation error for a pair of volatilities}\label{fig:approx}
\end{subfigure}
\begin{subfigure}{.48\textwidth}
\centering
     \includegraphics[width=\linewidth]{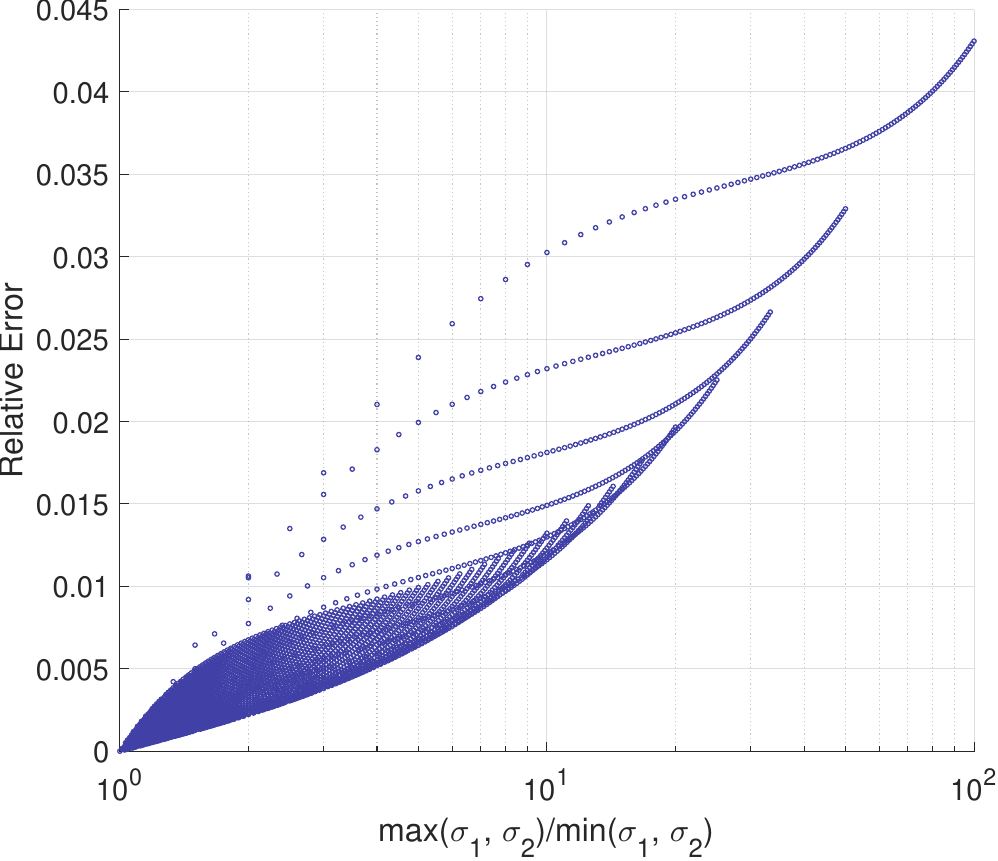}
      \caption{The approximation error vs disparity(max-min ratio) of volatilities }\label{fig:scatter}
\end{subfigure}
    \caption{Visualization of relative error in the approximation relation \eqref{eqpsi}}\label{fig1}
\end{figure}

It is evident from Figure \ref{fig:approx} that the error is minimal when $\sigma_1$ and $\sigma_2$ are identical. This can be explained by appealing to the homogeneity hint, stated under the non-parametric setting in Theorem \ref{theo2.2}. A separate visualization of error's dependence on the degree of mismatch of the volatilities is given in the scatter plot Figure \ref{fig:scatter} under the same setting. 

In this $\frac{\max(\sigma_1, \sigma_2)}{\min(\sigma_1, \sigma_2)}$, a measure of mismatch, is plotted on the horizontal axis and the corresponding relative error of the approximation \eqref{eqpsi} is plotted on the vertical axis. 
Every blue circle corresponds to a pair of volatilities $(\sigma_1, \sigma_2)$ for which the relative error is computed. The horizontal and the vertical coordinates of each circle are $\frac{\max(\sigma_1, \sigma_2)}{\min(\sigma_1, \sigma_2)}$ and  $\sup\{\frac{|\mathcal{U}_1 - \mathcal{U}_2|}{ \max(\mathcal{U}_1, \mathcal{U}_2)} \mid s_1>0, s_2>0\}$ respectively, where $\sigma_i$ is the constant volatility of $i$th asset. It shows that if the largest volatility is not larger than double the smallest, the relative error is less than $1.5\%$. On the other hand, the error is less than $3\%$ if the largest is not larger than 9 times the smallest.
\subsection{Approach $\mathcal{A}_{DS}$ for Domain Shift }\label{ssDS}
Assume that a learning approach takes risk-neutral return distribution of $\rho_1$-scaling $A_1$ of asset prices $S_1$ as a feature variable. To be more precise, the order statistics of mean subtracted log returns of $A_1$ for the past few time periods are included in the feature set. The approach should also have features from contract parameters $p$ and $T$, and risk-free assets' interest rate $r$. Also, set $\mathcal{U}(s_1,r, \sigma_1, p_1, T_1, \rho_1)$ as the target, which can be obtained from the observed option value $\varphi_1$ in the training data set. 
Consequently, for a test data set on another asset $S_2$, after feeding in the feature values to the trained model, a value of the target $\mathcal{U}(s_2,r, \sigma_2, p_2, T_2,\rho_2)$ is predicted. Then the option price can be revived using the following formula 
\begin{align}\label{eq:Output_DS}
\varphi_2(t, s_2, p_2 s_2, T_2) = s_2\left( \rho_2 \frac{1+p_2^*}{2} \sqrt{\frac{T_2}{2\pi}}\mathcal{U}(s_2,r, \sigma_2, p_2,T_2,\rho_2)  + \frac{1-p_2^*}{2}\right), 
\end{align}
irrespective of its volatility as long as $\rho_2$ is the volatility scalar for $S_2$. Therefore, the above-mentioned learning approach is better adapted for the domain shift. We denote this as $\mathcal{A}_{DS}$.
Of course, it is possible to develop a different learning model following this idea by choosing an approximation relation more accurate than \eqref{eqpsi} for possibly better performance. So, as far $\mathcal{A}_{DS}$ is concerned, we can imagine that although this simple approximation facilitates predicting option prices reasonably well for test data sets with domain shifts, would fail to outperform $A_{HH}$ when domain shift is absent. This raises the need of an ensemble model.

\subsection{Ensemble Model} \label{ssAE} The difference between the present and the past mean volatilities, normalized by the second one, may be considered as a crude measure of the domain shift. We call this measure the domain shift quotient or simply DSQ. To be more precise, if $\sigma_0$ and $\sigma_i$ are the average historical volatility of the full training data set and the current historical volatility associated with $i$th test data respectively, the formula for the DSQ for the specific test data is $\frac{|\sigma_i -\sigma_0|}{\sigma_0}$.
In this subsection, we propose an ensemble model where the ratio of weights given to the predictions from $\mathcal{A}_{DS}$ and $\mathcal{A}_{HH}$ models is $ \lambda_1 \left(\frac{|\sigma_i -\sigma_0|}{\sigma_0}\right)^{\lambda_2}$, a power function of DSQ. We denote the resulting approach as $\mathcal{A}_{E(\lambda)}$ where $\lambda =(\lambda_1, \lambda_2)$. 
If $P_{DS}(i)$ and $P_{HH}(i)$ are the predicted prices of $i$-th option from the models with approaches $\mathcal{A}_{DS}$ and $\mathcal{A}_{HH}$ respectively trained with same data, then $P_{E}(i)$, the predicted price from an ensemble approach $\mathcal{A}_{E(\lambda)}$, is defined as 

\begin{equation}\label{def:ensemble}
    P_{E(\lambda)}(i) =  \frac{1}{1 + \lambda_1 \left(\frac{|\sigma_i -\sigma_0|}{\sigma_0}\right)^{\lambda_2}} P_{HH}(i)+   \frac{ \lambda_1 \left(\frac{|\sigma_i -\sigma_0|}{\sigma_0}\right)^{\lambda_2} }{1 + \lambda_1 \left(\frac{|\sigma_i -\sigma_0|}{\sigma_0}\right)^{\lambda_2}} P_{DS}(i).    
\end{equation}
As a result, for a fixed $\lambda$, the weight to $P_{HH}$ diminishes as the atypicality rises in the test data. The more atypical the return data, i.e., the larger the relative difference between the test and the training volatilities, the higher weight is assigned to the prediction $P_{DS}$ in the ensemble approach $\mathcal{A}_{E(\lambda)}$. The parameter $\lambda$ is estimated using the least square method from a sample of test data. The ensemble approach using the estimated $\lambda$ is simply denoted by $\mathcal{A}_{E}$. The performance of the ensemble model may be evaluated for in-sample and out-of-sample data.

\section{Data}\label{sec:data}
According to the Futures Industry Association (FIA), 84\% of global equity options are now traded on Indian exchanges, up from just 15\% a decade ago.\footnote{Article by Bennett Voyles \href{https://www.fia.org/marketvoice/articles/premium-turnover-indian-options-hits-150-billion}{https://www.fia.org/marketvoice/articles/premium-turnover-indian-options-hits-150-billion}} In the first quarter of 2024 alone, over 34.8 million equity index options were traded on India’s two main financial derivatives exchanges—the National Stock Exchange of India (NSE) and the Bombay Stock Exchange (BSE) India. This was double the volume from Q1 2023 and accounted for 73\% of all futures and options traded globally.
Despite this high trading volume, the market's value remains relatively small compared to North American and European markets due to lower notional values and premiums per contract. For example, in March 2024, NSE's total options premiums were \$150 billion, about a quarter of the \$598 billion seen across all US options exchanges. Nonetheless, the Indian market's rapid growth and substantial trading volume makes it a crucial area for developing and applying data-driven option trading strategies.

For our study, we analyze daily contract price data for European call options on NIFTY50\footnote{NIFTY 50, abbreviated as NIFTY50, serves as a benchmark index for the Indian stock market. It reflects the weighted average performance of 50 of the largest companies listed on the National Stock Exchange of India.} and BANKNIFTY\footnote{The NIFTY Bank index, referred to as BANKNIFTY, consists of the most liquid and highly capitalized Indian banking stocks.}.
The abundance of data allows a learning model to better capture and depict market behavior and trading patterns. On the other hand, due to the presence of many traders' participation, the traded prices are free from market inefficiency. The dataset, spanning January 1, 2015, to April 30, 2020, is obtained from the NSE’s historical archives\footnote{Link to access the data -\href{https://www1.nseindia.com/products/content/derivatives/equities/historical_fo.htm}{https://www1.nseindia.com/products/content/derivatives/equities/historical\_fo.htm}}. 
The data-driven supervised learning approaches for pricing European-style call options, outlined in \cite{GRT} reportedly performed worse in abrupt highly volatile market conditions. Our objective is to introduce a novel market-resilient approach for improving the prediction for out-sample data having atypical asset returns. To this end, we incorporate testing of the trained model with data from the COVID-19 lockdown period, when the financial market crashed, i.e., from January 2020 to April 2020.

Each data point includes the following information, (i) the daily closing prices of the underlying asset and a particular European call option, (ii) moneyness and the time to maturity of that call contract, (iii) daily 19 log returns of the underlying asset from the 20 most recent daily data, (iv) daily close price of the same call contract on the current and earlier day and v) prevailing risk-free interest rate. Moreover, we filter out all data points for which the same option contract did not exist the previous day.
Our choice of a 20-day window aligns approximately with a calendar month, accounting for holidays. Since we compute the daily 19 log returns of the underlying asset, to have these log returns from January 1, 2015, to April 30, 2020, we consider the daily data of the underlying assets (specifically, NIFTY50 and BANKNIFTY indices) from October 1, 2014, to April 30, 2020. 

We retain datapoints only for option contracts that are near at-the-money (ATM), specifically those where the ratio of the strike price to the spot price is between 96\% and 104\%. We refer to these as near-ATM option contracts. It is observed that numerous near-ATM contracts are traded daily, with varying times to maturity. Since, contracts with large or minimal times to maturity have significantly low trading volumes, we focus on studying contracts with time-to-maturity values between 3 and 45 days.

\noindent \textbf{Train/Test Split:} To develop a predictive model using XGBoost supervised learning, we partitioned the dataset into two segments, allocating roughly $80\%$ for training and $20\%$ for evaluating the trained models. The split is such that the training set includes data dated before the oldest data of the test set. This is for avoiding information leakage as explained in \cite{WR2022}.
Specifically, the training dataset spans $56$ months, covering the period from January 01, 2015, to August 31, 2019.

\noindent Subsequently, the test dataset is subdivided into two parts to facilitate comprehensive evaluation: 
\begin{enumerate}
    \item \textbf{Typical}: The ``typical" test dataset encompasses data from September 01, 2019, to December 31, 2019. This period is characterized by typical market dynamics, enabling model evaluation under market behaviour similar to that observed during the training phase.
    \item \textbf{Atypical}:  The ``atypical" test dataset consists of data from January 01, 2020, to April 30, 2020. This period includes the COVID-19 Indian financial market crash, presenting a unique testing scenario to assess model robustness under an extremely volatile market. The dynamics during this period are significantly different from those observed during the training phase, thereby offering valuable insights into the model's performance in unprecedented market scenarios.
\end{enumerate}
Table \ref{tab:test_train_split} provides a breakdown of the number of data points at each stage of the model building and evaluation process, offering insight into the dataset's composition and distribution across training and test phases.

\begin{table}[ht]
\caption{Train/Test split of European style call option contract dataset sizes}
\label{tab:test_train_split}
\begin{tabular}{@{}|c|c|c|c|@{}}
\toprule
\textbf{\textbf{Dataset}}  &  \textbf{Time Period}   & \textbf{N50} & \textbf{BNF} \\ \midrule
\textit{Raw}    &  01/01/2015 - 30/04/2020 &  1611549  &     576302          \\ \midrule
\textit{Filtered}    &  01/01/2015 - 30/04/2020 &  27372  &          37284     \\ \midrule
\textit{Train} & 01/01/2015 - 31/08/2019  &     19182  &     26348           \\ \midrule
\textit{Typical Test}  &  01/09/2019 - 31/12/2019 &    3692   &     3709           \\ \midrule
\textit{Atypical Test}   &  01/01/2020 - 30/04/2020 &  3047      &  2774              \\ \bottomrule
\end{tabular}
\end{table}

It is important to highlight that the sum of data points in the training, typical test, and atypical test datasets do not match the number of data points in the filtered dataset. This discrepancy occurs because, during the feature generation step in our approach, we remove the rows of data associated with option contracts that do not have a previous day option closing price, which is a feature in our learning approach. 

\begin{figure}[!h]
    \centering
    \includegraphics[width = 0.55\textwidth]{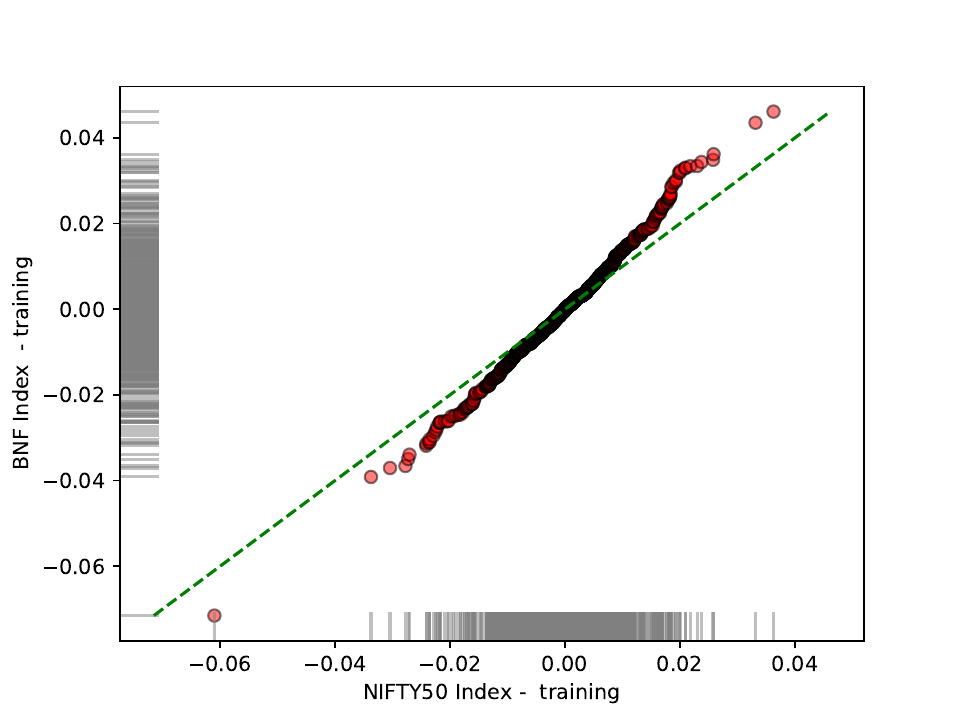}
    \caption{Q-Q plot for the log return distributions of the ``Close" prices of the BANKNIFTY and NIFTY50 indices for training dataset from January 01, 2015 to August 31, 2019}
    \label{fig:QQ_plot_training}
\end{figure}

\begin{figure}[h]
\centering
\begin{subfigure}{.49\textwidth}
\centering
     \includegraphics[width=\linewidth]{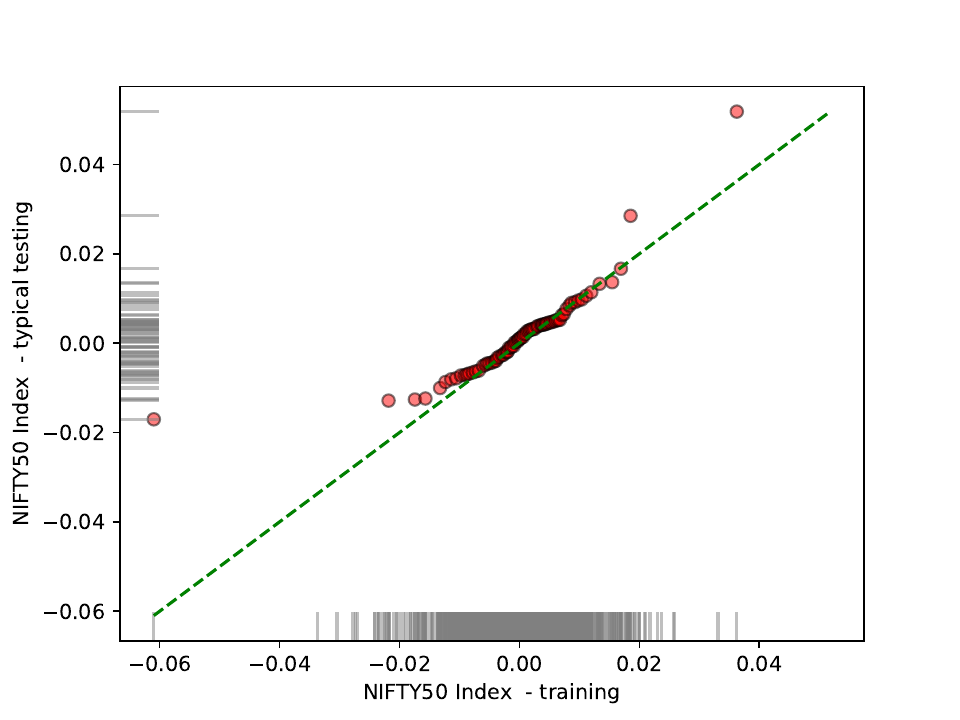}
    \caption{Train vs Typical test data}\label{fig:QQ_plot_train_typical_test}
\end{subfigure}
\begin{subfigure}{.49\textwidth}
\centering
     \includegraphics[width=\linewidth]{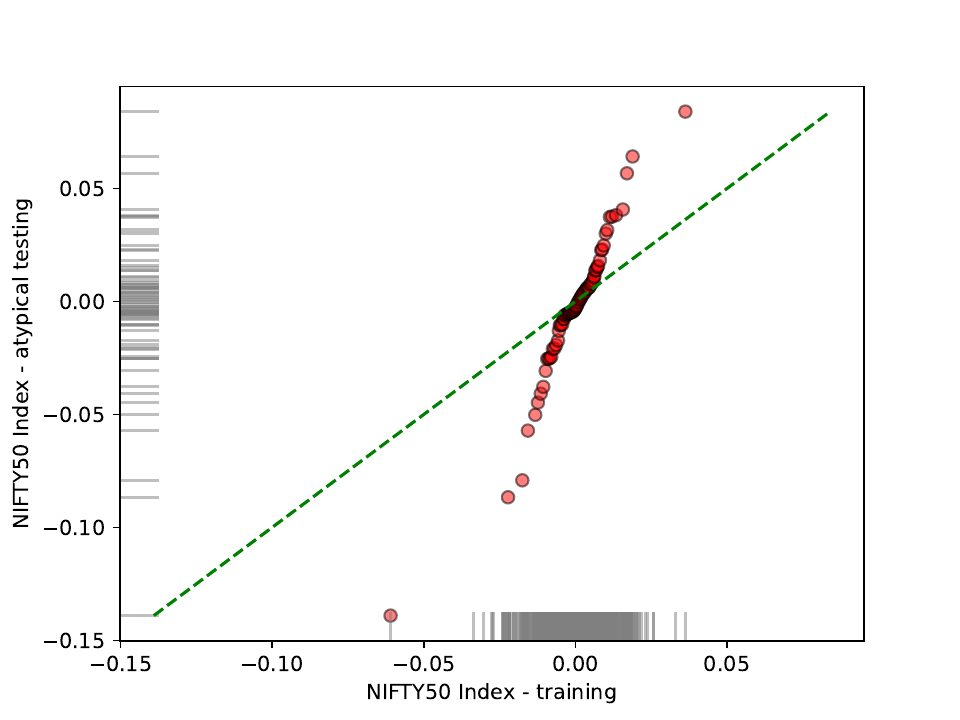}
      \caption{Train vs Atypical test data}\label{fig:QQ_plot_train_atypical_test}
\end{subfigure}
    \caption{Q-Q plot for the log return distributions of the “Close” prices of  NIFTY50 in training and testing data}\label{fig3}
\end{figure}

\begin{figure}[h]
\centering
\begin{subfigure}{.49\textwidth}
\centering
   \includegraphics[width=\linewidth]{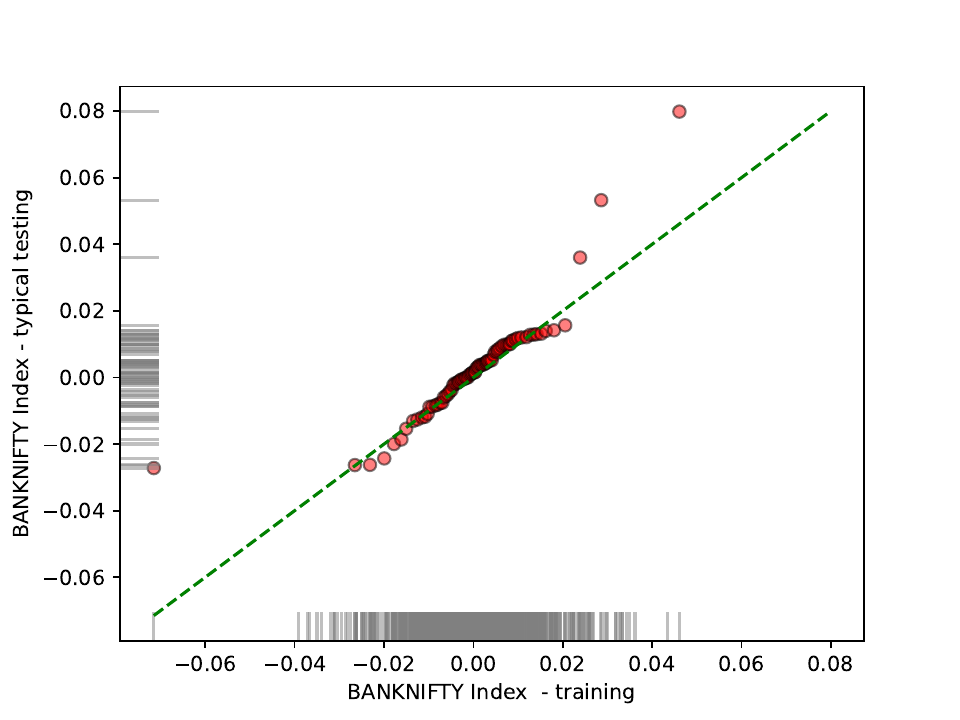}
    \caption{Train vs Typical test data}\label{fig:QQ_plot_train_typical_test_bnf}
\end{subfigure}
\begin{subfigure}{.49\textwidth}
\centering
\includegraphics[width=\linewidth]{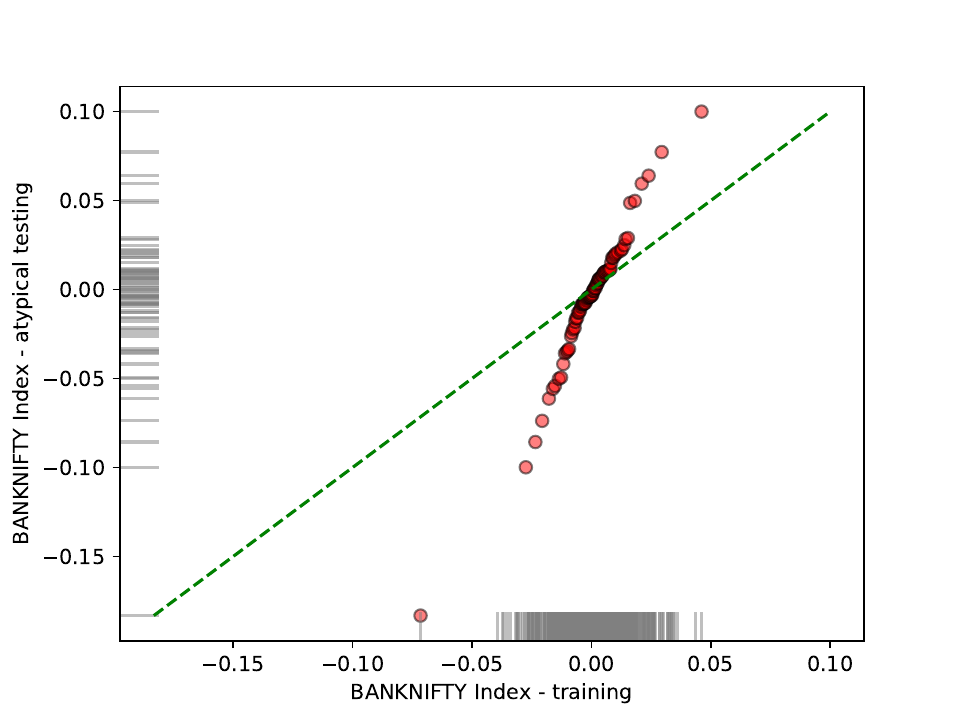}
      \caption{Train vs Atypical test data}\label{fig:QQ_plot_train_atypical_test_bnf}
\end{subfigure}
    \caption{Q-Q plot for the log return distributions of the “Close” prices of  BANKNIFTY in training and testing data}\label{fig4}
\end{figure}
A brief exploratory data analysis is presented below. The Q-Q plot in Figure  \ref{fig:QQ_plot_training} shows that the log returns of daily closing values of NIFTY50 and BANKNIFTY indices do not possess identical distribution. Nevertheless they are not significantly different during the period that constitutes the training dataset. The Q-Q plot is linear and has a slightly different slope than the identity line. Next, we compare the statistical behavior of log-returns of daily closing values of NIFTY50 index during different time intervals. In Figure \ref{fig:QQ_plot_train_typical_test} the said comparison between the data from training window and the typical test window is presented. On the other hand, Figure \ref{fig:QQ_plot_train_atypical_test} presents a comparison between the data from training and the atypical test windows. In \ref{fig:QQ_plot_train_typical_test} the Q-Q plot is aligned with the identity line but in \ref{fig:QQ_plot_train_atypical_test} the slop of the Q-Q plot is visibly far from the identity line. Similar comparisons are reported in Figure \ref{fig4} where we compare the distributions of log-returns of daily closing values of BANKNIFTY index during different time intervals.
In \ref{fig:QQ_plot_train_typical_test_bnf}, except few outliers, the return distribution of the typical test dataset matches well with that of the training dataset. The points in the Q-Q plot approximately lie on the identity line. However, when the atypical test data is compared with the training data in Figure \ref{fig:QQ_plot_train_atypical_test_bnf}, the distributions of log-returns of the closing values appear to differ drastically. Although the Q-Q plot is linear in this case, the slop vastly mismatches with that of the identity line. These observations justifies the use of terminology ``typical'' and ``atypical'' for two non-overlapping test datasets.

\section{Description of Supervised Learning Approach}\label{sec:description-learning-approach}
In this section, we outline the steps involved in the supervised learning approach utilized in this study. 

\noindent \textbf{Step 1: \underline{Data Cleaning and Filtering}}
In this essential step, we prepare the gathered raw data for European call option contracts to apply the ML algorithms. Initially, we eliminate rows containing blank `-' entries in the `Underlying Value' or `Strike Price' columns. Additionally, any rows containing values of `0' in the `Open' or `Close' columns are removed.

To further refine the selection, we follow the modularity approach of \cite{GRT} and focus on option contracts that closely resemble at-the-money (ATM) contracts. This involves filtering contracts based on a criterion where the absolute quantity $\big\vert 1- \frac{S}{K} \big\vert$ is less than or equal to the pre-decided value of 0.04 as considered in \cite{GRT}, where $S$ and $K$ respectively represent the spot price of underlying and strike price of the option contract. However, it is important to note that contracts with significantly large or small time to maturities exhibit notably diminished trading volumes. Therefore, to conduct a more focused analysis, we opt to exclusively investigate contracts with time-to-maturity (`TTM') values falling within the range of not more than 45 days and not less than three days.

\noindent \textbf{Step 2: \underline{Feature Generation}}
Let $\Delta $ denote the time granularity in year unit and $\bar S(t):= S(\Delta*t)$ the price of the risky asset at present time $\Delta\times t$, where $t$ is an integer. The discrete time series $\{\bar S(t) \mid t=0,1,\ldots\}$ gives daily data if $\Delta = 1/255$, by imagining 255 number of trading days in a year. 
Let $n$ be a fixed natural number. For each $i\le n$, the feature variable $F_i:=R_{(i)}$, where, $\{ R_{(i)} : i =1, \ldots, n\}$  is order statistics of $R_i:= \ln(\bar S({t-n+i})/ \bar S({t-n-1+i})) -\frac{1}{n}\ln(\bar S({t})/\bar S({t-n}))$, the centered daily log return. The order statistics $R_{(i)}$ are determined by arranging $R_{i}$ in ascending order for each sample. To clarify, the $i^{\text{th}}$ order statistic, denoted by $x_{(i)}$, of a sample comprising various real values $(x_1, x_2, x_3, \ldots, x_n)$ represents the $i^{\text{th}}$-smallest value, such that $x_{(i)}=x_{\beta(i)}$ where $\beta$ is a permutation on  $\{1, \ldots, n\}$, and $x_{(1)} \le x_{(2)} \le \cdots \le x_{(n)}$ is satisfied. We set $n=19$ for sampling past daily data of nearly one month. At this stage, we compile a set of 23 features to be utilized in supervised learning models: \begin{itemize}
        \item The set comprises $19$ order statistics of centered log return, denoted by  $F_1$ through $F_{19}$, derived from index (or underlying asset) data. 
        \item  The time-to-maturity (`TTM') of the option contract under study, measured in days.
        \item Moneyness of the option contracts is calculated as the ratio $\frac{K}{S}$, representing the ratio between the strike price $K$ of the option contract and the underlying value or Spot Price $S$ being examined. Its reciprocal is used as a feature.
        \item  Normalized previous option price, computed as $100 \times \frac{C_{t-1}}{S_{t-1}}$, where $C_{t-1}$ denotes the previously reported close price of the option contract under consideration and $S_{t-1}$ denotes the spot price at time $t-1$. If the option contract was not traded on the previous day (i.e., $C_{t-1}$ data is unavailable), we exclude this option from consideration. 
        \item  The ideal bank's interest rate is approximated by the three-month sovereign bond yield rates in the literature and added as a feature. This serves as an estimation for risk-free interest rates.
    \end{itemize}

\noindent \textbf{Step 3: \underline{Target Variable}}
We use regression to implement the Homogeneity Hint Approach $\mathcal{A}_{HH}$, as described in Remark \ref{rmkHH}. According to this approach (also see \cite{GRT}) `$\mathcal{A}_{HH}$ target' is the `normalized price' and is given by
$$ \mathcal{A}_{HH}  \textrm{ target} = 100 \times \frac{C}{S},
$$
a scale-free target variable.
For the regression implementation of the $\mathcal{A}_{DS}$ approach, as presented in Subsection \ref{ssDS}, the scale-free target variable referred to as the `$\mathcal{A}_{DS}$ target' is defined by \eqref{eqpsi}, i.e.,  
    \begin{align*}
 \mathcal{A}_{DS} \textrm{ target} = \frac{1}{\rho} \sqrt{\frac{2 \pi}{T}} \left(\frac{C}{S (1+p^*)/2} - \frac{(1-p^*)}{(1+p^*)}\right).
    \end{align*}
    where $C,S$, and $T$ denote the call option price, underlying value, and time-to-maturity in years, respectively.   Here, $p=\frac{K}{S}$ is moneyness,  $p^* =\frac{Ke^{-rT}}{S} = pe^{-rT}$ is discounted moneyness, where $K$ is the strike price of the call option contract and $r$ is the interest rate of an ideal bank estimated to `Yield03' value divided by $100$. The volatility scalar $\rho$ is estimated using historical volatility, the annualized (assuming 255 trading days in a year) standard deviation computed from the set of feature variables $F_1$ to $F_{19}$, derived earlier. 

\noindent \textbf{Step 4: \underline{Ensemble Model}} At this step we implement the ensemble approach  \eqref{def:ensemble} formulated in Subsection \ref{ssAE}. Since the total availability of atypical data is limited, for statistically meaningful performance analysis, all atypical data is kept for testing. However, for effective tuning of $\lambda$, the training set is required to have both typical and atypical data. Anyways, in view of the primary goal of the investigation, we give up tuning the parameters $\lambda_1$ and $\lambda_2$, and set a canonical value $\lambda_1=\lambda_2 =1$. That is the ratio of weights given to the predictions
from $\mathcal{A}_{DS}$ and $\mathcal{A}_{HH}$ is identical to DSQ.
As a result, we have
$$ P_{E(1,1)} = \frac{1}{1 + DSQ} P_{HH}+   \frac{ DSQ}{1 + DSQ} P_{DS}.$$
Although due to this simplicity, the performance may not be superior, however, it is enough for us to demonstrate the promise of an ensemble model, which is the primary goal of this study.

\noindent \textbf{Step 5: \underline{Testing and Error Metric}}
After the training phase, it is imperative to evaluate our model's performance using unseen test data, ensuring a comprehensive understanding of prediction quality. This study employs Root Mean Square Error (RMSE) to assess regression model predictions. Assume that the total number of data points in the test set is $N$ and $e_i$ is the difference between the target value and the predicted target value of a model $\mathcal{M}$ at the $i$th entry. Then $e_i$ denotes the prediction error for the $i$th test data for each $i\le N$. Finally, the RMSE of the prediction of model $\mathcal{M}$ for the test data set is defined as
\begin{equation}\label{def:RMSE}
    \textrm{RMSE} =  \left( \frac{1}{N} \sum_{i=1}^N (e_i)^2 \right)^{\frac{1}{2}}.
\end{equation}
It is crucial to emphasize that the evaluation specifically centers on gauging the regression model's performance with respect to the scale-free target variable denoted as `normalized price'. In the context of `$\mathcal{A}_{HH}$ target' scenarios, RMSE formula are applied directly to the test variable `normalized price'. However, for scenarios involving `$\mathcal{A}_{DS}$ target', we utilize \eqref{eq:Output_DS} to compute the predicted `normalized price' and subsequently evaluate the regression model's performance.

\noindent \textbf{Step 6:  \underline{Setting Numerical Values of Hyperparameters}} 
We ran the codes in Python version 3.10.4. XGBoost, a well-studied supervised ML algorithm developed by Chen and Guestrin \cite{CG16}, was utilized in this research. Table \ref{tab:hyperparams_xgb} presents the hyperparameter values used to train the XGBoost (Extreme Gradient Boosting) linear regression model described in this manuscript. While XGBoost offers numerous hyperparameters, we focused on the following key ones. For detailed information on each hyperparameter used, please refer to the XGBoost Documentation\footnote{There are many resources available to read about XGBoost parameters; for example, one can refer to \url{https://xgboost.readthedocs.io/en/stable/parameter.html}}. 
\begin{table}[!h]
\centering
\caption{Hyperparameter values for XGBoost}
\label{tab:hyperparams_xgb}
\begin{tabular}{@{}lr@{}}
\toprule
\textbf{Name} & \textbf{Value Set} \\ \toprule
n\_jobs
        & 4                     \\
objective                       
        & reg:squarederror            \\
n\_estimators                                                           & 500                \\
max\_depth                                                              & 6                  \\
learning\_rate                                                          & 0.01                \\ 
min\_child\_weight
        & 3                    \\ 
colsample\_bytree
        & 1.0                     \\ 
subsample
        & 0.6                     \\         \bottomrule
\end{tabular}
\end{table}

We conducted a cross-validated grid search using the scikit-learn GridSearchCV routine to determine the optimal values for these hyperparameters. However, we observed that the model performance did not vary significantly with changes in the hyperparameters.

\section{Model Performance}\label{sec:model-performance}
Our study involves analyzing and comparing the performance of three approaches, namely, $\mathcal{A}_{HH}$, $\mathcal{A}_{DS}$, and $\mathcal{A}_{E}$ for option price predictions.  While the first two are supervised learning models, the last one is an ensemble model, whose ensembling parameter is subject to calibration using the sample. The first approach $\mathcal{A}_{HH}$ appeared in \cite{GRT} whereas the other two are proposed in this article. 
We have collected historical option price data on two different indices N50 and BNF. Based on the data from each symbol we train three models using above mentioned three approaches. Moreover, we perform multi-source training using historical data from both symbols to get an additional combined model using each of the three approaches. Hence we obtain a total of 9 models. For each symbol of the training set, the ensembling parameter of the ensembling model is estimated from the training and test data of that symbol. We test each of these 9 models on  all four disjoint test data samples.

\subsection{Performance comparison using RMSE} The model performances are presented in Table \ref{tab:XGB performance}. In that, the models are classified into 3 classes based on the training data sets. The symbol/symbols of each training data are listed in the first column. The approaches used in the model are mentioned in the second column. The estimated values of ensembling parameters are mentioned next to each $\mathcal{A}_{E}$ approach in the second column.
The option price prediction using historical volatility and the Black-Scholes-Merton formula is set as the benchmark for comparison.
We compare the performance of all these models and the benchmark using two types of test data coming from each of the two symbols N50 and BNF.  
The performance, i.e., the RMSE of predictions, of each model is reported for all four disjoint test data samples. Given a training symbol and test data, the best-performed approach is indicated by highlighting the least RMSE value in orange. Given a test data, the least RMSE across all nine trained models is indicated with a star $(*)$ sign.

\begin{table}[ht]
\caption{Performance of XGBoost regression for $\mathcal{A}_{HH}$, $\mathcal{A}_{DS}$, and Ensemble Approaches are assessed. RMSE values for models with training (01/01/2015 - 31/08/2019), typical testing (01/09/2019 - 31/12/2019), and atypical testing (01/01/2020 - 30/04/2020) periods are reported.}
\begin{center}
\begin{tabular}{|c|c|c|c|c|c|}
\toprule
\multicolumn{6}{|c|}{\textbf{RMSEs of option price prediction models using XGBoost regression}} \\ \midrule
\textbf{Symbol(s) of}  &  \textbf{Approach} & \multicolumn{2}{|c|}{\textbf{Test - N50}}   & \multicolumn{2}{|c|}{\textbf{Test - BNF}}
\\ \cline{3-6} 
\textbf{Training data}  &   & \textbf{Typical}   & \textbf{Atypical} & \textbf{Typical} & \textbf{Atypical} \\ \midrule
    \hline
-    & Benchmark - BSM & 0.678 & 1.679 & 0.966 & 1.533  \\ \hline
    \multirow{3}{*}{\textbf{N50}} & $\mathcal{A}_{HH}$ & 0.286  & 1.339 & 0.404 & 1.313 \\
     & $\mathcal{A}_{DS}$ & 0.303 & 0.663  & 0.427 & 0.616 \\
     & $\mathcal{A}_{E(1,1)}$ & \colorbox{orange}{\textbf{0.218}}  & \colorbox{orange}{\textbf{0.554*}} & \colorbox{orange}{\textbf{0.288}} & \colorbox{orange}{\textbf{0.515}} \\
    \hline
    \multirow{3}{*}{\textbf{BNF}} & $\mathcal{A}_{HH}$ & 0.212 & 1.163 & 0.265 & 1.119 \\
     & $\mathcal{A}_{DS}$ & 0.247 & 0.743 & 0.299 & 0.641 \\
     & $\mathcal{A}_{E(1,1)}$ & \colorbox{orange}{\textbf{0.184*}} & \colorbox{orange}{\textbf{0.641}} & \colorbox{orange}{\textbf{0.223*}} & \colorbox{orange}{\textbf{0.538}} \\
    \hline
    \multirow{3}{*}{\textbf{N50+BNF}} & $\mathcal{A}_{HH}$ & 0.259 & 1.155 & 0.282 & 1.185 \\
     & $\mathcal{A}_{DS}$ & 0.247 & 0.648 & 0.312 & 0.585 \\
     & $\mathcal{A}_{E(1,1)}$ & \colorbox{orange}{\textbf{0.195}} & \colorbox{orange}{\textbf{0.574}} & \colorbox{orange}{\textbf{0.238}} & \colorbox{orange}{\textbf{0.502*}} \\
    \hline
\end{tabular}
\end{center}
\label{tab:XGB performance}
\end{table}

At the outset, we notice that for each test data, the column-wise reported RMSE values for all nine training models are smaller than the RMSE of the benchmark prediction. In fact, the lowest RMSE turns out to be smaller than $50\%$ of the benchmark RMSE for each of the four test data sets.
It is observed that on every atypical test data, the $\mathcal{A}_{DS}$ approach outperforms $\mathcal{A}_{HH}$ approach, no matter what the training data set is. This validates the usefulness of the domain shift approach presented in this paper. On the other hand, we also observe that $\mathcal{A}_{HH}$ outperforms $\mathcal{A}_{DS}$ on the typical test data of each symbol when the training data is from the same symbol. This empirically validates the theoretical conclusion of Theorem \ref{theo2.2}. We explain this below. We recall that $\mathcal{A}_{HH}$ is founded on perfect equalities, whereas $\mathcal{A}_{DS}$ is based on some approximations. So, in the special circumstances where Theorem \ref{theo2.2} is applicable, $\mathcal{A}_{HH}$ gives a better result than $\mathcal{A}_{DS}$. On the other hand, for typical test data of the same symbol of training data, as anticipated, the assumption of similarity between the return distributions of training and test is valid. 
Again, the ensemble approach outperforms both $\mathcal{A}_{HH}$ and $\mathcal{A}_{DS}$ on each of the test datasets. It is interesting to note that each of the 6 single-trained models performs similar or better on typical test data of N50 than those of BNF. On the contrary, each of those 6 models performs similar or better on atypical test data of BNF than those of N50. Such consistent performance of approaches on typical or atypical test data irrespective of the source of training data indicates the importance of consideration of domain shifts in the model building. It also indicates that while $\mathcal{A}_{HH}$ is insensitive to the domain shift, $\mathcal{A}_{DS}$ incorporates the domain adaptability reasonably well. So far prediction on BNF atypical test data is concerned, RMSE value of the ensemble model trained with N50 data is lower than that of the ensemble model trained with BNF data. This is another instance of the success of common representation space and domain adaptation.

\subsection{Performance of multi-source training model} Now, we contrast the performance of multi-source or combined training with single-source training models. We start with the following simple observation from Table \ref{tab:XGB performance}. We consider the performance of models trained on data of one symbol and tested on typical/atypical data of another symbol with the approach $\mathcal{A}_{HH}$ or $\mathcal{A}_{DS}$. RMSEs of eight such cross-symbol experiments are reported in Table \ref{tab:XGB performance}. We compare those with eight other corresponding RMSEs for the multi-source models. In seven out of eight cases, the RMSE of multi-source training is lower than those of cross-symbol experiments. 
We also observe the following. Although $\mathcal{A}_{HH}$ approach performs poorer than $\mathcal{A}_{DS}$ on atypical test data, the performance of $\mathcal{A}_{HH}$ approach on atypical test data gets generally better in the combined training model, compared to the single-source training model. Finally, for every test data, the performance of the multi-source ensemble model is close to the best performance. To be more precise, the relative difference of its RMSE value is less than $7\%$ from the RMSE of the best model on the same test data. This is empirical evidence of the reliability of the multi-source ensemble model.

The option price prediction performance of multi-source models with approaches $\mathcal{A}_{HH}$ and $\mathcal{A}_{DS}$, summarized in Table \ref{tab:XGB performance}, is illustrated more effectively through a histogram of residuals, defined as the differences between predicted and actual normalized price of options, as shown in Figure \ref{fig:histogram}. The horizontal axis represents the range of residuals, which is divided into 100 intervals, while the vertical axis shows the relative frequency. The histograms in Figure \ref{fig:histogram} are constructed using predictions from the XGBoost model, trained on combined datasets, and tested on NIFTY$50$ (plot \ref{fig:N50-hist}) and BANKNIFTY (plot \ref{fig:BNF-hist}) datasets, respectively. These predictions are made using both the $\mathcal{A}_{DS}$ and $\mathcal{A}_{HH}$ approaches, under typical and atypical testing scenarios. Hence, each plot has four histograms.
\begin{figure}[h]
\centering
\begin{subfigure}{.49\textwidth}
\centering
   \includegraphics[width=\linewidth]{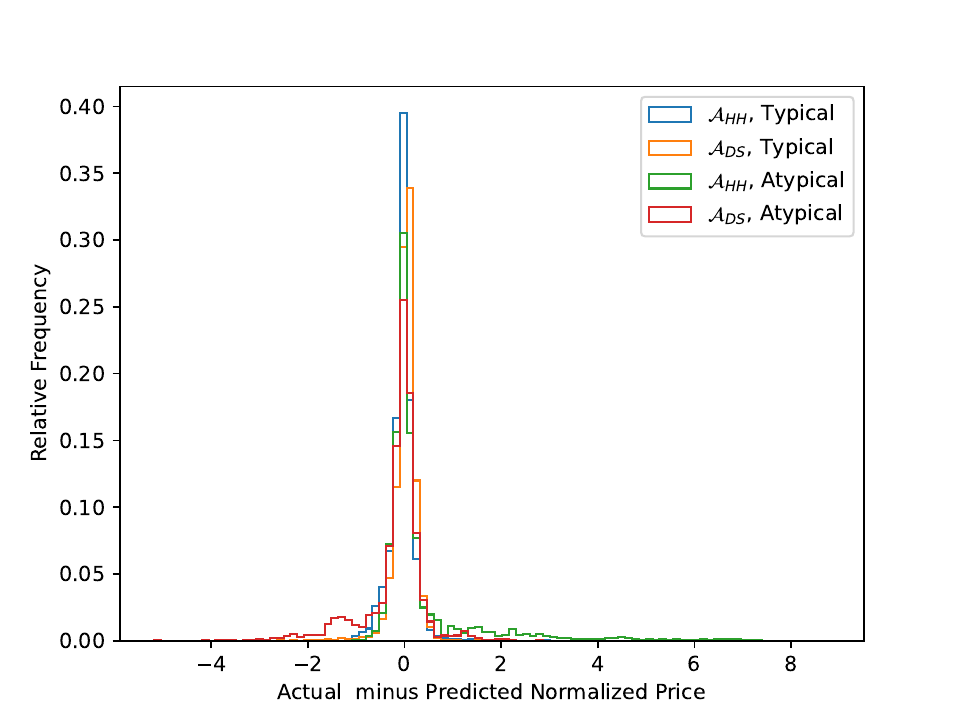}
    \caption{NIFTY50 testing data}\label{fig:N50-hist}
\end{subfigure}
\begin{subfigure}{.49\textwidth}
\centering
\includegraphics[width=\linewidth]{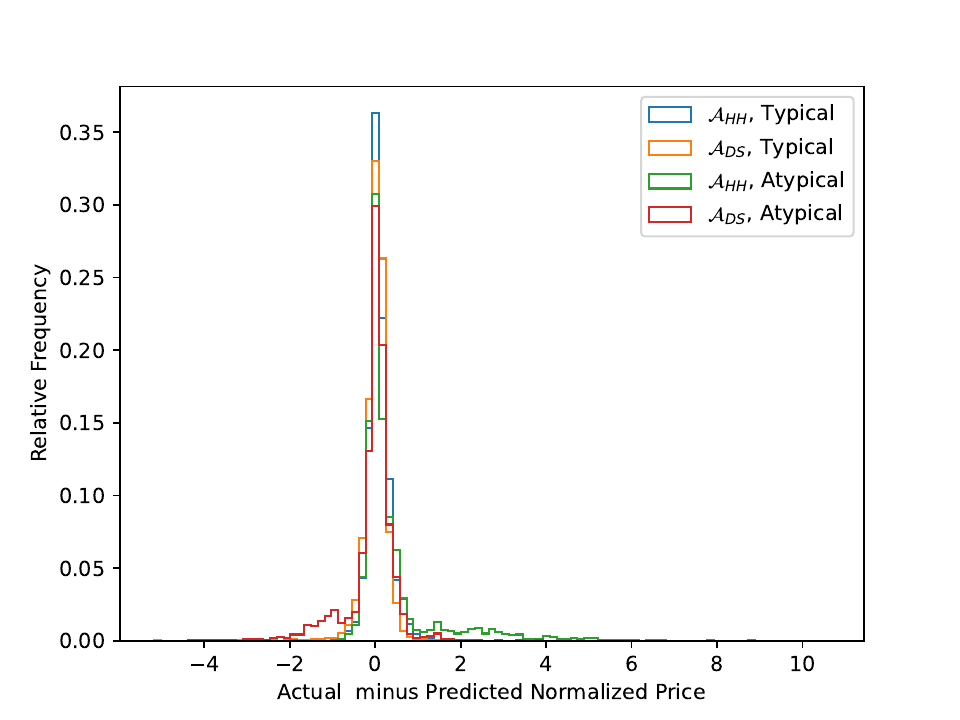}
      \caption{BANKNIFTY testing data}\label{fig:BNF-hist}
\end{subfigure}
    \caption{Histograms of subtraction of predicted from actual `normalized price' are plotted for models trained on combined N$50$ and BNF training data. Both models with $\mathcal{A}_{HH}$, and $\mathcal{A}_{DS}$ approaches are tested on different indices with both typical and atypical scenarios.}\label{fig:histogram}
\end{figure}
The performance reported in Table \ref{tab:XGB performance} indicates that under typical testing scenarios, the $\mathcal{A}_{HH}$ approach yields better accuracy than $\mathcal{A}_{DS}$ approach, as corroborated by the histogram. Specifically in plot \ref{fig:N50-hist}, for N$50$ test set, the residuals from $\mathcal{A}_{HH}$ approach (represented by the `blue' bars) are more tightly clustered around zero compared to the $\mathcal{A}_{DS}$ approach (`orange' bars). In contrast, for the BANKNIFTY test data, the difference in frequency levels between the two approaches is less pronounced, indicating a similar error margin. On the N50 typical test data, residual distributions for both the approaches exhibit (Figure \ref{fig:N50-hist}) symmetric and low dispersion around zero. This confirms the absence of bias in the prediction. A similar low dispersion of the prediction residual is also visible for BANKNIFTY typical test data in Figure \ref{fig:BNF-hist}. 

The dispersion of prediction residual for atypical test data appears to be significantly larger than that of the typical data scenario for every approach in both Figures \ref{fig:N50-hist} and \ref{fig:BNF-hist}. 
Additionally, an intriguing observation in the atypical testing scenario is that the $\mathcal{A}_{DS}$ approach leads to some outliers of overestimation of option prices relative to actual values, as evidenced by the leftward skew of the `red bars', while the $\mathcal{A}_{HH}$  approach leads to outliers of underestimation of prices, as indicated by the rightward skew of the `green' bars across both atypical testing indices. Hence the ensemble model performs significantly better on atypical data by neutralizing such polar outliers in $\mathcal{A}_{HH}$ and $\mathcal{A}_{DS}$ predictions.
In summary, the histograms provide a clearer and deeper insight into the models' performance across different testing scenarios than the RMSE errors offer in Table \ref{tab:XGB performance}.

\subsection{Model performance on synthetic test data} Now we present the results of two experiments conducted to assess the performance of trained models on synthetic data. The use of synthetic data overcomes real-world data limitations for assessing domain adaptation in rare but large domain shifts. The synthetic data is simulated using Black-Scholes-Merton model of asset prices and the formula for theoretical option prices. The primary goal of these experiments is to assess how the prediction quality of the proposed approaches varies as the test data shifts from the training dataset. It is important to clarify that these experiments are not intended to evaluate the overall performance of the models but to gain a comparative insight into model performances in extreme cases. These do not give a reliable performance measure in a typical scenario as both underlying asset returns and the option prices of test data are derived under specific mathematical assumptions that may not accurately reflect real-world prices. 

We simulate Geometric Brownian motion (GBM) with a drift parameter $\mu = 0.1$ and vary the volatility parameter from $1\%$ to $30\%$ in increments of $1\%$. For each volatility level, we generate daily data of length $520$ days. This test set is supplemented with prices of several near-ATM option contracts, with time to maturity in $\{10, 25, 40\}$ and risk-free interest rate of $5\%$, calculated using the Black-Scholes-Merton formula. We then test each of the trained models for each variant of the test data and plot the predictions' RMSE against the volatility parameter. 
The analysis is centered on comparing the performance of the models with $\mathcal{A}_{DS}$, $\mathcal{A}_{HH}$, and ensemble model $\mathcal{A}_E$ approaches, and trained on various datasets. Given the nature of simulated test data, although the distinction between typical and atypical test scenarios is not direct, still all the synthetic data generated with $\sigma>20\%$ may be viewed as atypical.

\subsubsection{Experiment 1} 
\begin{figure}[h]
\centering
\begin{subfigure}{.49\textwidth}
\centering
\includegraphics[width=\linewidth]{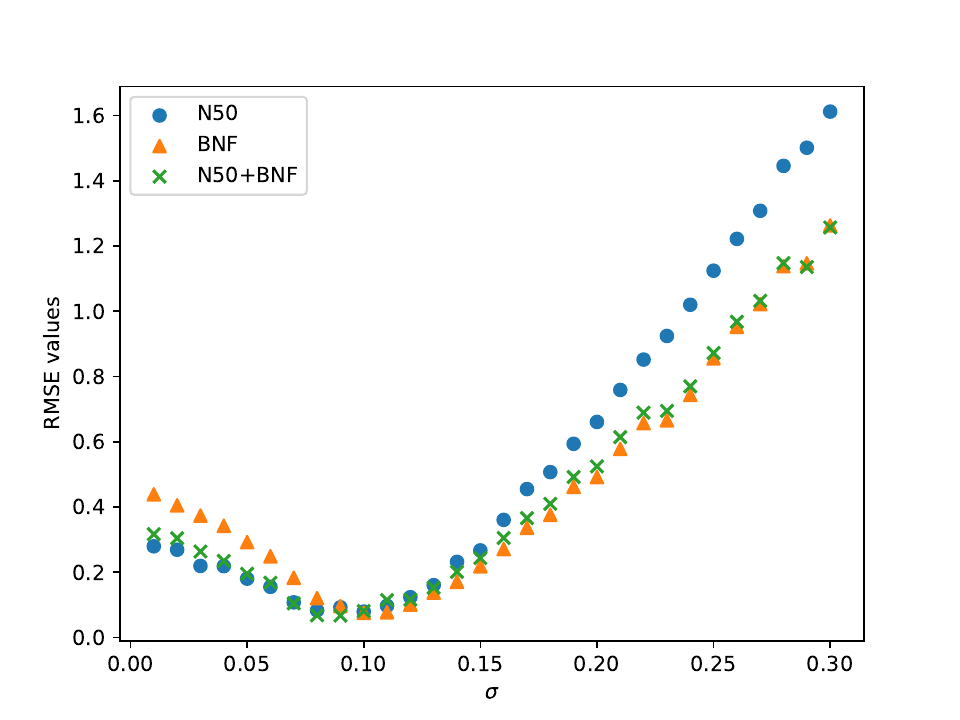}
\caption{Performance of $\mathcal{A}_{HH}$ models}\label{fig:no-cut-introspection}
\end{subfigure}
\begin{subfigure}{.49\textwidth}
\centering
   \includegraphics[width=\linewidth]{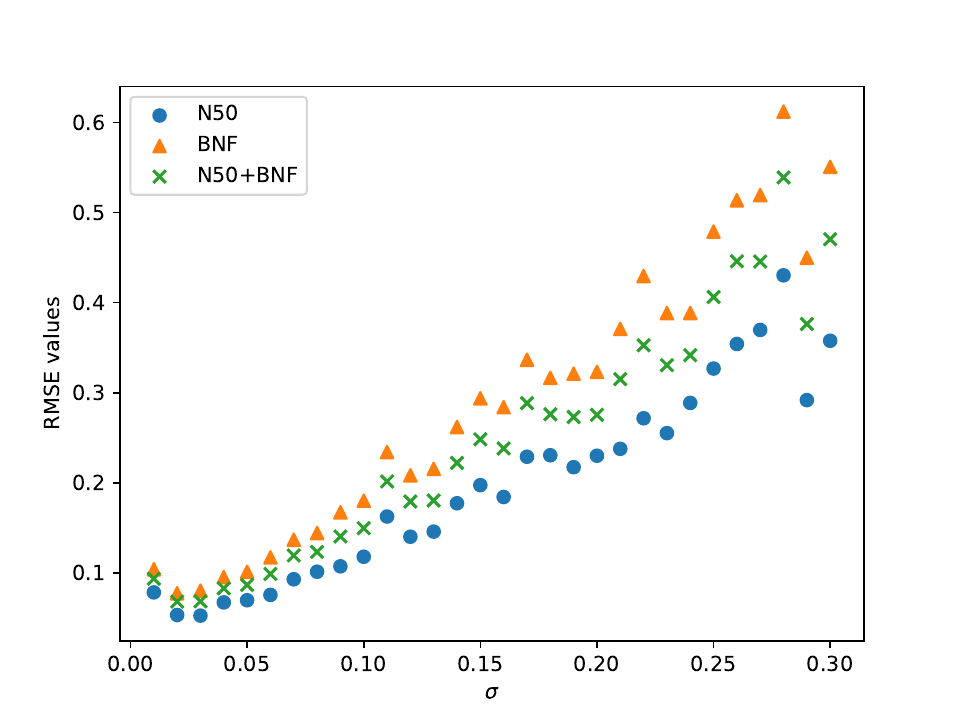}
\caption{Performance of $\mathcal{A}_{DS}$ models}\label{fig:cut-introspection}
\end{subfigure}
    \caption{Model performance on a range of parametrized synthetic data}\label{fig:introspection}
\end{figure}
In the first experiment, we evaluate and compare the models' prediction quality on synthetic test data for various training datasets. We perform this comparison for synthetic test data having varying volatility levels, and models with each approach, $\mathcal{A}_{HH}$ and $\mathcal{A}_{DS}$. 
Indeed, for every synthetic data with volatility parameter $\sigma$ not less than $12\%$, among $\mathcal{A}_{HH}$ models, the BNF trained model has the least RMSE, whereas N50 trained model has the largest RMSE. This can be explained using the historical volatility values of respective training data. As the historical volatility of BNF $(14.05\%)$ is higher than that of N50 $(11.38\%)$, the test data with high $\sigma$ is less unfamiliar to BNF-trained data than to N50-trained data. 
The same consistent comparison is not visible in the $\mathcal{A}_{DS}$ based models. 
In contrast, the $\mathcal{A}_{DS}$ approach aims to mitigate such errors. Figure \ref{fig:cut-introspection} provides supporting evidence of the absence of performance dependence on the type or volume of training data. 
Moreover,  as noted in \cite{GRT}, the effectiveness of approach $\mathcal{A}_{HH}$ diminishes with domain shift, i.e., the RMSE values of each $\mathcal{A}_{HH}$ model steadily increase for synthetic test data with larger volatility values. 
Although the RMSE of prediction by $\mathcal{A}_{DS}$ grows with larger $\sigma$ beyond  $\sigma =3\%$, the growth is slower and oscillatory than that by $\mathcal{A}_{HH}$. This growth of RMSE for $\mathcal{A}_{DS}$ prediction may be attributed to the growth of relative error in approximation relation \eqref{eqpsi} with the increase of disparity(max-min ratio) of volatilities as illustrated in Figure \ref{fig:scatter}.
These results align with the intent of making the $\mathcal{A}_{DS}$ approach less sensitive to discrepancies between the return distributions of the training and testing datasets.

\subsubsection{Experiment 2} 
\begin{figure}[h]
\centering
\begin{subfigure}{.49\textwidth}
\centering
   \includegraphics[width=\linewidth]{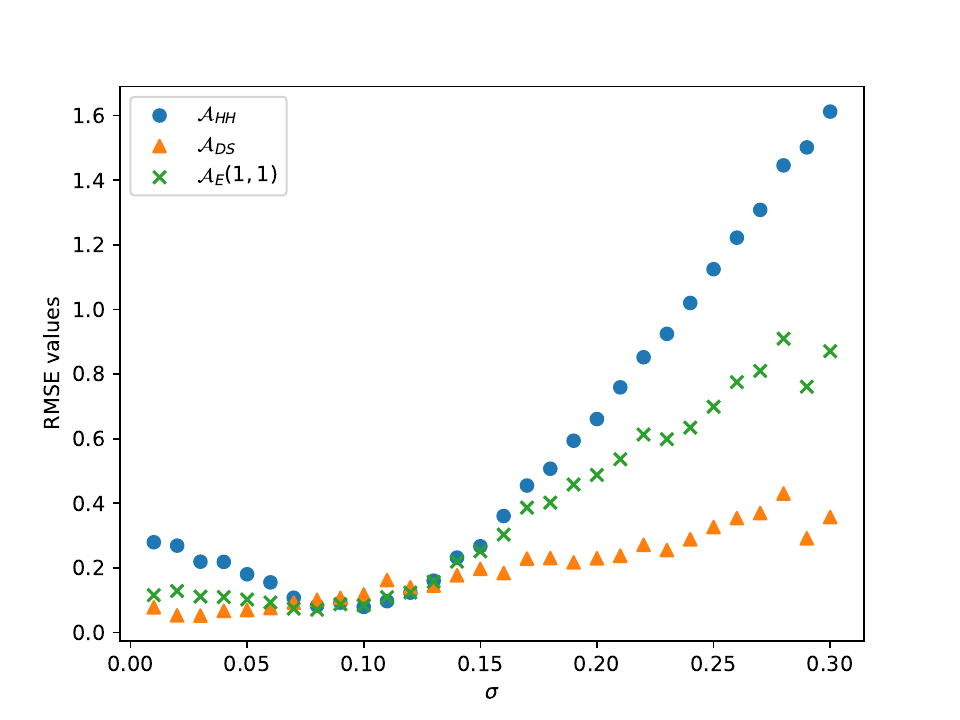}
    \caption{Performance of N50 trained model}\label{fig:N50-introspection}
\end{subfigure}
\begin{subfigure}{.49\textwidth}
\centering
\includegraphics[width=\linewidth]{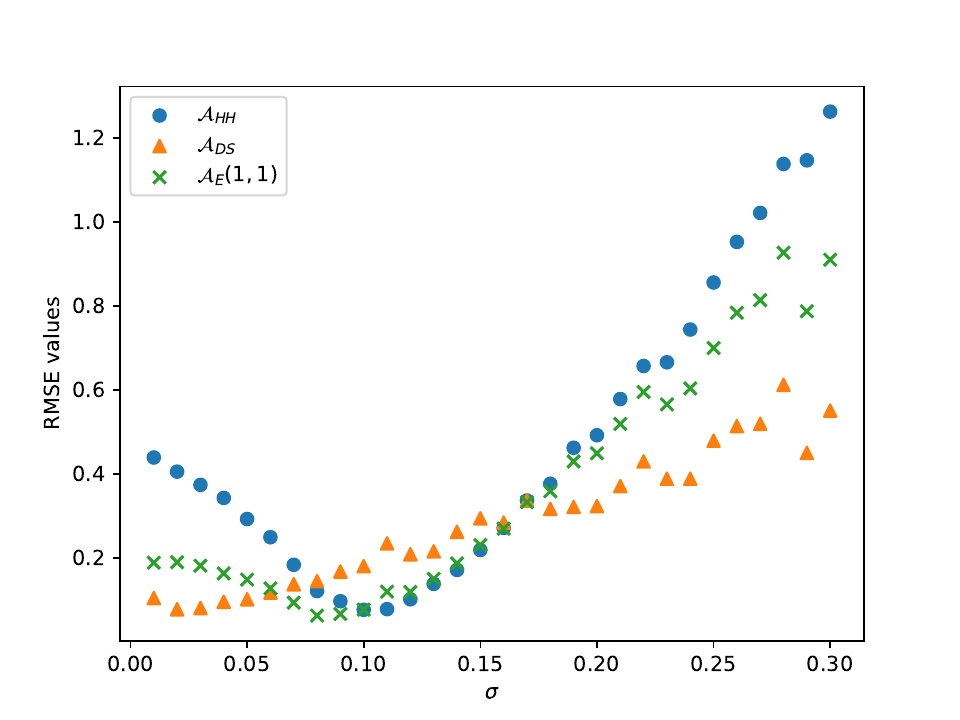}
      \caption{Performance of BNF trained model}\label{fig:BNF-introspection}
\end{subfigure}
\begin{subfigure}{.49\textwidth}
\centering
   \includegraphics[width=\linewidth]{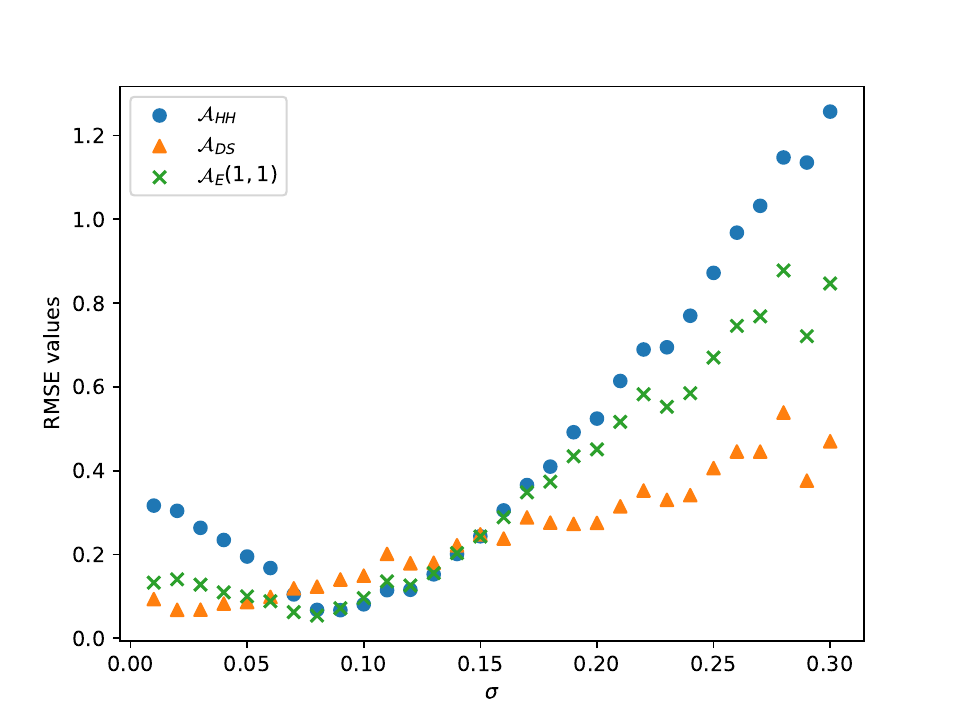}
    \caption{Performance of multi-source trained model}\label{fig:Comb-introspection}
\end{subfigure}
    \caption{Comparison of approaches $\mathcal{A}_{HH}$, $\mathcal{A}_{DS}$ and $\mathcal{A}_E$ for each of three different trained set using a parametric family of synthetic data} \label{fig:introspection-different dataset}
\end{figure}
In another experiment, we compare the RMSE values for all three approaches $\mathcal{A}_{HH}$, $\mathcal{A}_{DS}$ and the emsemble model on the afore mentioned collection of synthetic data. Separate comparison plots are presented for models trained on the NIFTY$50$, BANKNIFTY and combined datasets. As anticipated, above a volatility level (i.e., above $15\%$) of the test data, the approach $\mathcal{A}_{DS}$ outperforms $\mathcal{A}_{HH}$ for every training set (see each subplot of Fig \ref{fig:introspection-different dataset}). Additionally, performance graphs for the ensemble model are presented with  $\lambda_1=1$ and $\lambda_2=1$. In every case (i.e., subplots \ref{fig:N50-introspection}, \ref{fig:BNF-introspection}, \ref{fig:Comb-introspection}) the RMSE of the ensemble model is mostly between the RMSEs of $\mathcal{A}_{HH}$ and $\mathcal{A}_{DS}$ models, except for a small range of the volatility values of the synthetic data, where it is smallest. However, that small range corresponds to the typical scenario, as it is in the neighborhood of the HV of the training dataset. So, it explains why Table \ref{tab:XGB performance} conveys the best performance by $\mathcal{A}_{E(1,1)}$ models for real test datasets.

\section{Conclusion}\label{sec:conclusion}
This work presents several interesting elements and fresh ideas for option pricing applications of machine learning. This paper puts forward an interesting idea of utilizing a formula for implied volatility estimates to create a common representation space so that a better domain adaptation is achieved. This method also allows training models on several options markets (i.e., options with different underlying securities), since this could improve out-of-sample performance in each considered market. Finally, we formulate a novel enseble model, depending on a domain shift quotient, that works for both typical and atypical test data. We believe, this is a promising avenue for further research and could potentially bring some new insights into the field of data-driven option pricing. Moreover, we also assess the models using a family of synthetic test data for gaining a better understanding of the models.

\noindent \textbf{Acknowledgments:}
The first author acknowledges the financial support from the National Board for Higher Mathematics (NBHM), Department of Atomic Energy, Government of India, under grant number 02011-32-2023-R$\&$D-II-13347 and the Department of Science and Technology under the Project-Related Personal Exchange Grant DST/INTDAAD/P-12/2020. The support and the resources provided by ‘PARAM Brahma Facility’ under the National Supercomputing Mission, Government of India at the Indian Institute of Science Education and Research (IISER) Pune are gratefully acknowledged. The financial support by the London Mathematical Society through Scheme 5 ‘Collaborations with Developing Countries’, with Ref 52335, is greatly acknowledged by the last author.

\noindent \textbf{Declarations:}
\begin{itemize}
\item Conflict of interest: The authors declare no conflict of interest.
\item Availability of data and materials: Data sharing is not applicable to this article as no new data were created or analyzed in this study. All the empirical studies reported in this article are based on openly available data.
\item Authors' contributions: All authors contributed equally towards the manuscript.
\end{itemize}

  \bibliography{ml.bib}{}
  \bibliographystyle{siam}

\end{document}